\newtheorem{theorem}{Theorem}
\newtheorem{lemma}{Lemma}
\newtheorem{corollary}{Corollary}
\newtheorem{definition}{Definition}
\newtheorem{example}{Example}
\newcommand{\argmax}{\mathop{\rm argmax}\limits}
\newcommand{\figcaption}[1]{\def\@captype{figure}\caption{#1}}
\newcommand{\tblcaption}[1]{\def\@captype{table}\caption{#1}}
\title{Socially efficient mechanism on the minimum budget}
\author{
    \textbf{Hirota Kinoshita}\\
    The University of Tokyo\thanks{This research project was conducted at IBM Research.}\\
    \texttt{hirotak@g.ecc.u-tokyo.ac.jp}
    \and
    \textbf{Takayuki Osogami}\\
    IBM Research - Tokyo\\
    \texttt{osogami@jp.ibm.com}
    \and
    \textbf{Kohei Miyaguchi}\\
    IBM Research - Tokyo\\
    \texttt{miyaguchi@ibm.com}
}
\begin{document}

\maketitle

\begin{abstract}
    In social decision-making among strategic agents,
    a universal focus lies on the balance between social and individual interests.
    Socially efficient mechanisms are thus desirably designed to not only maximize the social welfare but also incentivize the agents for their own profit.
    Under a generalized model that includes applications such as double auctions and trading networks,
    this study establishes a socially efficient (SE),
    dominant-strategy incentive compatible (DSIC),
    and individually rational (IR) mechanism with the minimum total budget expensed to the agents.
    The present method exploits discrete and known type domains to reduce a set of constraints into the shortest path problem in a weighted graph.
    In addition to theoretical derivation,
    we substantiate the optimality of the proposed mechanism through numerical experiments,
    where it certifies strictly lower budget than Vickery-Clarke-Groves (VCG) mechanisms for a wide class of instances.
\end{abstract}

\section{Introduction} \label{sec:intro}

We address the ubiquitous problem of getting a group of self-interested strategic agents to act in the best interests of society.  A common approach in mechanism design is to introduce a broker, who selects the optimal option with respect to social welfare, maximizing the total benefits for the agents or minimizing their total costs, and incentivizes the agents to act for social welfare according to the selected option by paying them or receiving payments from them.  Such mechanisms that can be formally characterized by the properties of social efficiency (SE), dominant-strategy incentive compatibility (DSIC), and individually rationality (IR) are considered to be effective in various applications such as
trading networks \cite{myerson1983efficient,Osogami2023}
and
double-sided auction in cloud markets \cite{chichin2017cloud,jiang2018,kumar2018cloud}.
In particular,
a powerful generic solution is the Vickrey-Clarke-Groves (VCG) mechanism,
which has been analyzed and applied in numerous technical and practical contexts.

However, the prior work establishes impossibility theorems, suggesting that SE, DSIC, and IR may not be achieved without a budget contributed from the broker \cite{myerson1983efficient,Osogami2023}.  It would thus be reasonable for the broker to minimize the budget while achieving the three properties or to maximize revenue (i.e., negative budget) if those properties can be achieved without a (positive) budget, possibly to compensate for the budget needed for other occasions.
An example appears in workforce management,
where each labor receives compensation for their workload from the supervisor (broker) who seeks to achieve a goal as a result of the collective work;
then the supervisor should minimize the total compensation as long as the goal is achieved with minimal total workload (negative social welfare is minimized).
Related applications include federated learning \cite{zhan2022survey} and cloud-sourcing \cite{xia2017revenue}.

Our primary contribution is a novel algorithm that gives the budget-optimal solution among all mechanisms that satisfy the desired properties of SE, DISC, and IR.  A key idea in our approach is to exploit discrete and known type domains of agents to reduce a set of constraints, which are required by the above properties, into the shortest path problem in a weighted graph.  Unlike the majority of related works, the proposed mechanism lies beyond the celebrated class of VCG mechanisms.  The proposed mechanism outperforms the best possible VCG mechanism and gives strictly lower budgets for a majority of random instances in our numerical experiments.

The present work develops upon a general model of mechanism design formally defined in Section~\ref{sec:model}.
As a baseline, we revisit the VCG mechanism in Section~\ref{sec:vcg} from the perspective of minimizing the budget.
In Section~\ref{sec:proposed},
we develop a mechanism that achieves the minimum budget among all mechanisms having the desired properties,
followed by theoretical discussions about the optimality and the computational complexity.
Finally,
numerical experiments in Section~\ref{sec:exp} add empirical analysis of the optimal budget obtained by the proposed mechanism.

\section{Related work} \label{sec:related}

\paragraph{Maximizing revenue without social efficiency}

There have been a number of studies that seek to maximize the revenue just as we minimize the budget,
however, at the expense of SE (thus only guarantee DSIC and IR).
They mostly focus on one-sided auctions,
where the agents incur no costs and hence the broker charges them.
The noticeable majority of such approaches study the class of affine maximizer auction (AMA) \cite{Likhodedov2004,Likhodedov2005,Sandholm2015,Guo2017,Duan2023,Curry2024},
which can be seen as a technical variant of the VCG mechanism.
While those studies are clearly distinguished from ours as they no longer require SE and mostly restrict themselves to auctions and to the specific class of solutions,
we discuss in Appendix~\ref{sec:ama} how the proposed mechanism can also be technically extended along this setting.

\paragraph{Balancing budget}

A series of studies have sought to control the budget exactly to zero (strong budget balance; SBB),
or suppress it at most zero (weak budget balance; WBB) instead.
On one hand,
they have revealed negative facts such as the Myerson-Satterthwaite theorem~\cite{myerson1983efficient,Othman2009},
the Green-Laffont impossibility theorem~\cite{green1977impossibility,green1979impossibility},
and other related or extended results \cite{schweizer2006impossibility,nath2016impossibility,Osogami2023}.
On the other hand, when WBB is achievable,
instead of maximizing the revenue as we do,
the prior work has investigated ways to distribute the positive revenue back to agents,
which has been an interest in socially efficient mechanism design since the earliest literature
\cite{Tideman1976,Groves1977,Tideman1977,Rob1982,Collinge1983,bailey1997demand}.
In a variety of related studies,
they have derived analytical methods for some simple cases
\cite{cavallo2006optimal,guo2007worst,guo2009worst,gujar2011redistribution,guo2011vcg,guo2012worst,Victor2013}
and also provided approximately optimal heuristics for more general difficult settings
\cite{dufton2021optimizing,manisha2018learning,tacchetti2022learning}.
Unlike these prior studies, our approach minimizes the budget in a provably optimal manner when WBB is not achievable, and maximizes the revenue when it is achievable.
The positive revenue achieved by our approach could also be redistributed to the agents, as we discuss in Appendix~\ref{sec:redist}.

\section{Model} \label{sec:model}
We start with a common context in mechanism design \cite{Nisan2007},
formally encapsulated in \thref{def:env} as an environment.
It involves a set of agents $\mathcal{N}$ and a set of available options $\mathcal{X}$;
each agent $i\in\mathcal{N}$ gains individual value $v_i(X)\in\mathbb{R}$ determined upon each option $X\in\mathcal{X}$ by a type $v_i\colon\mathcal{X}\to\mathbb{R}$ drawn from their type domain $\mathcal{V}_i$.
As supplementary notations,
given an environment $\mathcal{E} = (\mathcal{N}, \mathcal{X}, \mathcal{V})$,
we let $\mathcal{V}_{-i} \coloneqq \prod_{j\in\mathcal{N}\setminus\{i\}} \mathcal{V}_{j}$ for each agent $i\in\mathcal{N}$
denote the Cartesian product of all agents' type domains except $i$'s;
similarly,
for any $v\in\mathcal{V}$ and $i\in\mathcal{N}$,
let $v_{-i} \coloneqq (v_j)_{j\in\mathcal{N}\setminus\{i\}} \in \mathcal{V}_{-i}$
denote a partial list of $v$ that excludes the $i$'s type $v_i$ only.

\begin{definition}[Environment] \thlabel{def:env}
    An \textbf{environment} denotes a tuple $\mathcal{E} = (\mathcal{N}, \mathcal{X}, \mathcal{V})$,
    which consists of a non-empty finite set of \textbf{agents} $\mathcal{N}$,
    a non-empty (possibly infinite) set of \textbf{options} $\mathcal{X}$,
    and the Cartesian product $\mathcal{V} = \prod_{i\in\mathcal{N}}\mathcal{V}_i$ of non-empty finite sets of \textbf{types},
    or \textbf{type domains},
    $\mathcal{V}_i\subseteq\mathbb{R}^\mathcal{X}$ for each agent $i\in\mathcal{N}$.
\end{definition}

A mechanism in \thref{def:AMD} is exercised upon the environment by introducing an independent party,
who we call the broker.
It serves as a decision maker on behalf of agents,
while it also compensates or charges them depending on the types reported by them, which may differ from their true types.
The present model treats the mechanism as a static and public protocol known to all agents.

\begin{definition}[Mechanism] \thlabel{def:AMD}
    For an environment $\mathcal{E}=(\mathcal{N}, \mathcal{X}, \mathcal{V})$,
    a \textbf{mechanism} $\mathcal{M} = (\phi, \tau)$ consists of
    an \textbf{option rule} $\phi:\mathcal{V}\rightarrow\mathcal{X}$
    and
    a \textbf{payment rule} $\tau:\mathcal{V}\rightarrow\mathbb{R}^{\mathcal{N}}$,
    which works as follows:
    \begin{enumerate}
        \item Each agent $i\in\mathcal{N}$ reports a type $\hat v_i\in\mathcal{V}_i$ to the broker.
        \item Based on the list of reported types $\hat v = (\hat v_i)_{i\in\mathcal{N}}\in\mathcal{V}$,
              the broker selects an option $\phi(\hat v)\in\mathcal{X}$ and
              pays $\tau_i(\hat v)\in\mathbb{R}$ to each agent $i\in\mathcal{N}$.\footnote{A negative payment $\tau_i(\hat v)<0$ means the opposite transaction from the agent $i$ to the broker.}
        \item It demands the \textbf{budget} $B(\hat{v};\mathcal{M})\coloneqq\sum_{i\in\mathcal{N}}\tau_i(\hat{v})$ of the broker,
              and results in the \textbf{utilities} $u_i(\hat{v}; \mathcal{M})\coloneqq v_i(\phi(\hat{v})) + \tau_i(\hat{v})$ for each agent $i\in\mathcal{N}$, where $v_i$ denotes the true type of the agent $i$.
    \end{enumerate}
\end{definition}

On designing such a mechanism,
a prioritized interest lies in the social welfare,
which amounts to the net benefit that the mechanism produces on the environment as a whole.
A mechanism is thereby expected to always select the option that maximizes the social welfare on behalf of the environment,
when it is said to satisfy Social Efficiency, or SE~\eqref{eq:SE}, in the following \thref{def:SE}.
In addition,
because agents individually seek higher profit,
mechanisms are required to be justified in the sense of~\thref{def:justified}.
Namely, they should satisfy not only SE but also both of two important properties in the following \thref{def:DSIC_IR}.
Firstly,
Dominant-Strategy Incentive Compatibility,
or DSIC~\eqref{eq:DSIC},
incentivizes agents to report their own type truthfully.
In other words,
DSIC does not allow
any agent to gain by reporting a false type.
Secondly,
Individual Rationality,
or IR~\eqref{eq:IR},
clearly ensures every agent non-negative utility;
they do not want to experience negative utility due to participation.
Note that our formal definitions of SE \eqref{eq:SE} and IR \eqref{eq:IR} correctly match the literal concepts when they are accompanied by DSIC \eqref{eq:DSIC}.
We basically consider mechanisms that satisfy DSIC hereafter,
and thus do not explicitly distinguish the types reported by agents from their true ones.

\begin{definition}[SE] \thlabel{def:SE}
    For an environment $\mathcal{E}=(\mathcal{N}, \mathcal{X}, \mathcal{V})$,
    a mechanism $\mathcal{M} = (\phi, \tau)$,
    or just an option rule $\phi$,
    is said to satisfy
    \textbf{Social Efficiency (SE)} if and only if
    \begin{align}
         & \phi(v) \in \argmax_{X\in\mathcal{X}} S(X; v), & \forall v\in\mathcal{V}, \label{eq:SE}
    \end{align}
    where we define the \textbf{social welfare} $S:\mathcal{X}\times\mathcal{V}\to\mathbb{R}$ as
    \begin{align}
        S(X; v) \coloneqq  \sum_{i\in\mathcal{N}} v_i(X). \label{eq:welfare}
    \end{align}
    The social welfare due to the mechanism $\mathcal{M} = (\phi, \tau)$ is also expressed as follows:
    \begin{align}
        S(\phi(v); v) = \sum_{i\in\mathcal{N}} u_i(v;\mathcal{M}) - B(v;\mathcal{M}). \label{eq:welfare_sum}
    \end{align}
\end{definition}

\begin{definition}[DSIC, IR] \thlabel{def:DSIC_IR}
    For an environment $\mathcal{E}=(\mathcal{N}, \mathcal{X}, \mathcal{V})$,
    a mechanism $\mathcal{M} = (\phi, \tau)$ is said to satisfy
    \textbf{Dominant-Strategy Incentive Compatibility (DSIC)} if and only if
    \begin{align}
        u_i(v; \mathcal{M}) & \ge v_i(\phi(v_i', v_{-i})) + \tau_i(v_i', v_{-i}), & \forall v_i'\in\mathcal{V}_i, \forall i\in\mathcal{N},\forall v\in\mathcal{V}; \label{eq:DSIC}
    \end{align}
    \textbf{Individual Rationality (IR)} if and only if
    \begin{align}
        u_i(v; \mathcal{M}) & \ge 0, & \forall i\in\mathcal{N},\forall v\in\mathcal{V}. \label{eq:IR}
    \end{align}
\end{definition}

\begin{definition}[Justified mechanism] \thlabel{def:justified}
    For an environment $\mathcal{E}=(\mathcal{N}, \mathcal{X}, \mathcal{V})$,
    a mechanism $\mathcal{M} = (\phi, \tau)$ is said to be \textbf{justified}
    if and only if it satisfies SE~\eqref{eq:SE}, DSIC~\eqref{eq:DSIC}, and IR~\eqref{eq:IR}.
\end{definition}
Feasibility for a justified mechanism to suppress the budget $B(v;\mathcal{M})$ in \thref{def:AMD} at most zero (weak budget balance; WBB),
or exactly equal to zero (strong budget balance; SBB),
is known with negative results even in a wide class of bilateral ($|\mathcal{N}| = 2$) environments
\cite{myerson1983efficient,Othman2009,Osogami2023}.
We demonstrate in Section~\ref{sec:proposed} how
the proposed justified mechanism is designed to
achieve the minimum budget
simultaneously for every type $v\in\mathcal{V}$,
with an arbitrarily specified option rule that satisfies SE~\eqref{eq:SE}.


\section{Vickrey-Clarke-Groves (VCG) mechanism} \label{sec:vcg}

In the majority of mechanism design,
such an option rule $\phi$ that satisfies SE~\eqref{eq:SE} often comes with the following payment rule $\tau$~\eqref{eq:VCG} to form the celebrated VCG family in \thref{def:VCG}.
Then \thref{thm:VCG_DSIC} claims that any VCG mechanism satisfies DSIC~\eqref{eq:DSIC}.

\begin{definition}[VCG mechanism] \thlabel{def:VCG}
    A mechanism $\mathcal{M} = (\phi, \tau)$ is called the \textbf{Vickrey-Clarke-Groves (VCG) mechanism} if and only if the option rule $\phi$ satisfies SE~\eqref{eq:SE} and
    \begin{align}
        \tau_i(v) & = S(\phi(v); v) - v_i(\phi(v)) - h_i(v_{-i}) \nonumber                                                                               \\
                  & = \sum_{j\in\mathcal{N}\setminus\{i\}} v_j(\phi(v)) - h_i(v_{-i}), & \forall i\in\mathcal{N},\forall v\in\mathcal{V}, \label{eq:VCG}
    \end{align}
    with some functions $(h_i:\mathcal{V}_{-i}\rightarrow\mathbb{R})_{i\in\mathcal{N}}$.
\end{definition}

\begin{theorem}[Theorem 1.17 in \cite{Nisan2007}] \thlabel{thm:VCG_DSIC}
    The VCG mechanism satisfies DSIC~\eqref{eq:DSIC}.
\end{theorem}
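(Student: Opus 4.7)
The plan is to expand both sides of the DSIC inequality~\eqref{eq:DSIC} using the VCG payment formula~\eqref{eq:VCG} and observe that, thanks to the specific form of $\tau_i$, each side collapses to a social welfare term plus a constant that does not depend on agent $i$'s reported type. Then SE of $\phi$ immediately yields the inequality.

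First, I would fix an arbitrary environment, agent $i \in \mathcal{N}$, true type profile $v \in \mathcal{V}$, and alternative report $v_i' \in \mathcal{V}_i$. Substituting~\eqref{eq:VCG} into the definition of utility in \thref{def:AMD}, the agent's utility under truthful reporting becomes
\begin{align*}
u_i(v;\mathcal{M}) = v_i(\phi(v)) + \sum_{j\in\mathcal{N}\setminus\{i\}} v_j(\phi(v)) - h_i(v_{-i}) = S(\phi(v);v) - h_i(v_{-i}),
\end{align*}
and the right-hand side of~\eqref{eq:DSIC}, obtained by replacing $v_i$ with $v_i'$ inside $\phi$ and $\tau_i$ only (recall $h_i$ does not depend on $v_i'$), becomes
\begin{align*}
v_i(\phi(v_i',v_{-i})) + \sum_{j\in\mathcal{N}\setminus\{i\}} v_j(\phi(v_i',v_{-i})) - h_i(v_{-i}) = S(\phi(v_i',v_{-i});v) - h_i(v_{-i}).
\end{align*}
Note that in both expressions the social welfare $S(\,\cdot\,;v)$ is evaluated at the true type profile $v$, because $v_i$ is the agent's actual valuation, regardless of what was reported.

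Subtracting the common term $h_i(v_{-i})$, the DSIC inequality reduces to $S(\phi(v);v) \ge S(\phi(v_i',v_{-i});v)$. This is exactly the statement that $\phi(v)$ maximizes $S(\,\cdot\,;v)$ over $\mathcal{X}$, which is guaranteed by SE~\eqref{eq:SE} applied at the true type profile $v$ (since $\phi(v_i',v_{-i}) \in \mathcal{X}$ is just one feasible option). This completes the argument.

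There is no real obstacle here; the only point that warrants care is the bookkeeping between the true type $v_i$ (which enters the utility definition) and the reported type $v_i'$ (which enters $\phi$ and the payment). The key structural feature that drives the proof is the alignment property of VCG: the term $h_i(v_{-i})$ is independent of $i$'s report, so maximizing $u_i$ over $v_i'$ is equivalent to maximizing $S(\phi(v_i',v_{-i});v)$ over $v_i'$, and truthful reporting achieves this maximum by SE.
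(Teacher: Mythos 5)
Your proof is correct and follows essentially the same route as the paper's: expand both sides of the DSIC inequality via the VCG payment formula, cancel the report-independent term $h_i(v_{-i})$, and reduce the claim to $S(\phi(v);v) \ge S(\phi(v_i',v_{-i});v)$, which is exactly SE at the true profile. Your explicit remark that the social welfare is evaluated at the true types on both sides is the right bookkeeping point and matches the paper's argument.
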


The so-called Clarke pivot rule in \thref{def:VCG-Clarke} is one reasonable way to choose $h_i$s in \thref{def:VCG}.
Such defined a mechanism,
VCG-Clarke,
\emph{conditionally} guarantees IR~\eqref{eq:IR}.
It is common in the literature to implicitly assume environments where the maximum in~\eqref{eq:Clarke} exists,
but replacing the maximum with the supremum preserves equivalent qualities without the assumption.

\begin{definition}[VCG-Clarke] \thlabel{def:VCG-Clarke}
    The Clarke pivot rule specifies the payment rule~\eqref{eq:VCG} by
    \begin{align}
        h_i(v_{-i}) = h^c_i(v_{-i}) & \coloneqq \max_{X\in\mathcal{X}} \left(S(X;v) - v_i(X)\right) \nonumber                                                                                \\
                                    & = \max_{X\in\mathcal{X}} \sum_{j\in\mathcal{N}\setminus\{i\}} v_j(X),   & \forall v_{-i}\in\mathcal{V}_{-i},\forall i\in\mathcal{N}, \label{eq:Clarke}
    \end{align}
    and we call this particular implementation of the VCG mechanism as \textbf{VCG-Clarke}.
\end{definition}

\begin{lemma}[Lemma 1.20 in \cite{Nisan2007}] \thlabel{lem:VCG_IR}
    VCG-Clarke $\mathcal{M}^{c} = (\phi,\tau^{c})$ makes every agent pay to the broker:
    \begin{align}
        \tau^{c}_i(v) & \le 0, & \forall i\in\mathcal{N},\forall v\in\mathcal{V}. \label{eq:Clarke_comp}
    \end{align}
    It satisfies IR~\eqref{eq:IR} and thus justified if all possible types are non-negative i.e.,
    \begin{align}
        v_i(X) & \ge 0, & \forall v_i\in\mathcal{V}_i,\forall i\in\mathcal{N},\forall X\in\mathcal{X}. \label{eq:positive_type}
    \end{align}
\end{lemma}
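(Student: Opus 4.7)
The plan is to handle the two assertions independently, then combine them with \thref{thm:VCG_DSIC} to conclude justification. Both assertions reduce to straightforward algebraic comparisons once the Clarke pivot rule~\eqref{eq:Clarke} is substituted into the VCG payment formula~\eqref{eq:VCG}.

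For the non-positivity claim~\eqref{eq:Clarke_comp}, I would expand $\tau^c_i(v) = \sum_{j \in \mathcal{N}\setminus\{i\}} v_j(\phi(v)) - \max_{X \in \mathcal{X}} \sum_{j \in \mathcal{N}\setminus\{i\}} v_j(X)$ and simply observe that $\phi(v) \in \mathcal{X}$ is a feasible candidate for the maximum, so the second term dominates the first. The inequality $\tau^c_i(v) \le 0$ follows immediately.

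For IR under the non-negativity hypothesis~\eqref{eq:positive_type}, I would rewrite the utility as $u_i(v;\mathcal{M}^{c}) = S(\phi(v); v) - \max_{X \in \mathcal{X}} \sum_{j \in \mathcal{N}\setminus\{i\}} v_j(X)$, then pick some $X^* \in \argmax_{X \in \mathcal{X}} \sum_{j \in \mathcal{N}\setminus\{i\}} v_j(X)$ and apply SE~\eqref{eq:SE} of $\phi$ to get $S(\phi(v); v) \ge S(X^*; v)$. Cancelling the common $\sum_{j \in \mathcal{N}\setminus\{i\}} v_j(X^*)$ term leaves $u_i(v;\mathcal{M}^{c}) \ge v_i(X^*)$, which is non-negative by~\eqref{eq:positive_type}. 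This establishes IR~\eqref{eq:IR}, and combining it with SE (built into \thref{def:VCG}) and DSIC (from \thref{thm:VCG_DSIC}) yields justification via \thref{def:justified}.

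I do not anticipate any substantive obstacle: the statement is a textbook consequence of the Clarke pivot construction. The only step requiring a small insight is bounding $S(\phi(v); v)$ below by the welfare at the pivot-optimal option $X^*$, encoding the economic intuition that the socially efficient option cannot produce less total welfare than an option tailored for the coalition excluding $i$, so that agent $i$'s residual value at $X^*$ lower-bounds their utility.
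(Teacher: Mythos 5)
Your proposal is correct and follows essentially the same route as the paper's proof: the non-positivity claim by noting $\phi(v)$ is a feasible candidate for the Clarke maximum, and IR by writing $u_i(v;\mathcal{M}^c) = \max_{X\in\mathcal{X}}\sum_{j\in\mathcal{N}} v_j(X) - \max_{X\in\mathcal{X}}\sum_{j\in\mathcal{N}\setminus\{i\}} v_j(X)$ and lower-bounding the first maximum at the maximizer $X^*$ of the second, leaving $v_i(X^*)\ge 0$. No gaps; your explicit choice of $X^*$ is just a spelled-out version of the paper's bound $u_i(v;\mathcal{M}^c)\ge\inf_{X\in\mathcal{X}} v_i(X)$.
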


The VCG-Clarke mechanism makes each agent $i\in\mathcal{N}$ pay to the broker the quantity
that expresses the negative effect that $i$'s participation has on the other agents $\mathcal{N}\setminus\{i\}$ (see \eqref{eq:externality} in the proof of \thref{lem:VCG_IR}).
Hence,
agents who have negative valuation on the option $\phi(v)$ result in negative utility;
IR may thus be violated in such cases.
In order to retain IR even when \eqref{eq:positive_type} does not hold,
we introduce an unconditionally justified VCG mechanism,
VCG-budget,
in \thref{def:VCG-budget}.
Moreover,
this mechanism achieves the minimum budget that can be obtained by any justified VCG mechanism,
as long as the option rule $\phi$ is fixed.

\begin{definition}[VCG-budget] \thlabel{def:VCG-budget}
    \textbf{VCG-budget} is defined as the VCG mechanism implemented by
    \begin{align}
        h_i(v_{-i}) = h^b_i(v_{-i})
         & \coloneqq  \min_{v_i'\in\mathcal{V}_i} S(\phi(v_i', v_{-i}); v_i', v_{-i})                                                                                 \nonumber                                                                                    \\
         & = \min_{v_i'\in\mathcal{V}_i}\max_{X\in\mathcal{X}} S(X; v_i',v_{-i}),                                                                                               & \forall v_{-i}\in\mathcal{V}_{-i},\forall i\in\mathcal{N}. \label{eq:VCG-budget}
    \end{align}
\end{definition}

\begin{lemma} \thlabel{lem:VCG-budget_IR}
    VCG-budget satisfies IR~\eqref{eq:IR};
    hence it is justified.
\end{lemma}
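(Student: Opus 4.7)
The plan is to verify IR directly by computing each agent's utility under VCG-budget and exploiting the fact that the function $h_i^b$ is defined as a minimum over the full type domain $\mathcal{V}_i$, which in particular contains the true type $v_i$. Since VCG-budget is by definition a VCG mechanism with SE option rule and DSIC follows from \thref{thm:VCG_DSIC}, the only nontrivial property to check is IR; once IR is in hand, the mechanism is justified by \thref{def:justified}.

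First, I would substitute the general VCG payment rule~\eqref{eq:VCG} into the utility definition in \thref{def:AMD} to obtain the convenient identity
\begin{align*}
u_i(v; \mathcal{M}) = v_i(\phi(v)) + \tau_i(v) = S(\phi(v); v) - h_i(v_{-i}),
\end{align*}
valid for any VCG mechanism. For VCG-budget, plugging in the specific choice $h_i = h_i^b$ from~\eqref{eq:VCG-budget} yields
\begin{align*}
u_i(v; \mathcal{M}^b) = S(\phi(v); v) - \min_{v_i' \in \mathcal{V}_i} S(\phi(v_i', v_{-i}); v_i', v_{-i}).
\end{align*}

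Next, I would observe that the minimum on the right is taken over all $v_i' \in \mathcal{V}_i$, and the true type $v_i$ is one such candidate. Instantiating $v_i' = v_i$ gives
\begin{align*}
\min_{v_i' \in \mathcal{V}_i} S(\phi(v_i', v_{-i}); v_i', v_{-i}) \le S(\phi(v_i, v_{-i}); v_i, v_{-i}) = S(\phi(v); v),
\end{align*}
so $u_i(v; \mathcal{M}^b) \ge 0$ for every $i \in \mathcal{N}$ and every $v \in \mathcal{V}$, establishing IR~\eqref{eq:IR}. Combining this with SE (built into the definition of VCG) and DSIC (\thref{thm:VCG_DSIC}) shows that VCG-budget is justified.

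There is essentially no hard step here: the argument is a one-line bound after unfolding the definitions, and the only subtlety worth flagging is that the equivalence between the two forms of $h_i^b$ in~\eqref{eq:VCG-budget} relies on the SE property of $\phi$ to identify $\max_{X\in\mathcal{X}} S(X; v_i', v_{-i})$ with $S(\phi(v_i', v_{-i}); v_i', v_{-i})$. This uses that $\mathcal{V}_i$ is finite (\thref{def:env}), so that the minimum in~\eqref{eq:VCG-budget} is attained and $h_i^b$ is well-defined without any supremum workaround of the kind mentioned after \thref{def:VCG-Clarke}.
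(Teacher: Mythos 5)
Your proof is correct and follows essentially the same route as the paper's: unfold the VCG utility to $S(\phi(v);v) - h^b_i(v_{-i})$ and note that the minimum defining $h^b_i$ ranges over a set containing the true type $v_i$, so it is bounded above by $S(\phi(v);v)$. The paper's proof is exactly this computation, stated slightly more tersely; your extra remarks about SE identifying the two forms of $h^b_i$ and finiteness of $\mathcal{V}_i$ are accurate but not load-bearing.
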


\begin{theorem} \thlabel{thm:VCG-budget_opt}
    VCG-budget $\mathcal{M}^{b} = (\phi, \tau^{b})$ satisfies,
    against any justified VCG mechanism $\mathcal{M} = (\phi, \tau)$,
    \begin{align}
        \tau^{b}_i(v)        & \le \tau_i(v),        & \forall i\in\mathcal{N},\forall v\in\mathcal{V}. \label{eq:vcg_tau_opt} \\
        B(v;\mathcal{M}^{b}) & \le B(v;\mathcal{M}), & \forall v\in\mathcal{V}. \label{eq:vcg_rev_opt}
    \end{align}
\end{theorem}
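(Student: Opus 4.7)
The plan is to characterize VCG mechanisms through their free parameters $(h_i)_{i\in\mathcal{N}}$ and show that IR pins down the maximum admissible $h_i$, which the VCG-budget rule attains.

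First, I would fix an arbitrary justified VCG mechanism $\mathcal{M} = (\phi,\tau)$ with corresponding functions $(h_i)_{i\in\mathcal{N}}$ as in \thref{def:VCG}. Observe from \eqref{eq:VCG} that $\tau_i(v)$ is a decreasing (affine) function of $h_i(v_{-i})$, with the other term $\sum_{j\neq i} v_j(\phi(v))$ being common to every VCG mechanism sharing the same option rule $\phi$. Hence, to prove \eqref{eq:vcg_tau_opt} it suffices to show $h_i(v_{-i}) \le h^b_i(v_{-i})$ for every $i\in\mathcal{N}$ and $v_{-i}\in\mathcal{V}_{-i}$.

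Next, I would extract this upper bound from IR. Since $\mathcal{M}$ is justified, for every $v_i\in\mathcal{V}_i$ we have $u_i((v_i,v_{-i});\mathcal{M}) = v_i(\phi(v)) + \tau_i(v) \ge 0$; substituting \eqref{eq:VCG} yields
\begin{align*}
    h_i(v_{-i}) \le v_i(\phi(v)) + \sum_{j\neq i} v_j(\phi(v)) = S(\phi(v_i,v_{-i}); v_i, v_{-i}).
\end{align*}
Because this inequality holds for \emph{every} $v_i \in \mathcal{V}_i$ while its left-hand side does not depend on $v_i$, I would take the minimum of the right-hand side over $v_i\in\mathcal{V}_i$, which, by \thref{def:VCG-budget}, equals exactly $h^b_i(v_{-i})$. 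Thus $h_i(v_{-i}) \le h^b_i(v_{-i})$, and substituting back into \eqref{eq:VCG} gives $\tau^b_i(v) \le \tau_i(v)$, establishing \eqref{eq:vcg_tau_opt}.

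Finally, \eqref{eq:vcg_rev_opt} follows immediately by summing \eqref{eq:vcg_tau_opt} over $i\in\mathcal{N}$, since $B(v;\mathcal{M}) = \sum_{i\in\mathcal{N}} \tau_i(v)$. The only nontrivial step is the IR-to-upper-bound passage in the middle paragraph; the remaining work is purely algebraic and requires no new ideas. Note in particular that this argument implicitly relies on $\mathcal{V}_i$ being nonempty (guaranteed by \thref{def:env}) so that the minimum defining $h^b_i$ is taken over a nonempty set, and on \thref{lem:VCG-budget_IR} to ensure $\mathcal{M}^b$ itself is justified so that the comparison in the theorem is meaningful.
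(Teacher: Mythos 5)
Your proposal is correct and follows essentially the same route as the paper: both extract the bound $h_i(v_{-i}) \le S(\phi(v_i,v_{-i});v_i,v_{-i})$ from IR for every $v_i\in\mathcal{V}_i$, minimize over $v_i$ to conclude $h_i(v_{-i}) \le h^b_i(v_{-i})$, and substitute back into \eqref{eq:VCG} to obtain \eqref{eq:vcg_tau_opt}, with \eqref{eq:vcg_rev_opt} following by summation. No gaps.
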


Proofs of \thref{lem:VCG-budget_IR} and \thref{thm:VCG-budget_opt} are straightforward and provided in Appendix~\ref{sec:proof}.

\section{Proposed mechanism} \label{sec:proposed}

Beyond the VCG family,
the present study explores the entire universe of justified mechanisms,
out of which we identify and propose a budget-optimal mechanism $\mathcal{M}^* = (\phi^*, \tau^*)$ in Algorithm~\ref{alg:main}.
It consists of two functions that respectively represent its option rule $\phi^*$ and payment rule $\tau^*$:
\textproc{SelectOption},
which takes in a type list $v\in\mathcal{V}$ and returns an option $\phi^*(v)\in\mathcal{X}$,
and \textproc{ComputePayments},
which takes in a type list $v\in\mathcal{V}$ and returns a list of payments $\tau^*(v)\in\mathbb{R}^\mathcal{N}$.
The option rule $\phi^*$ is arbitrarily specified in advance as long as it satisfies SE~\eqref{eq:SE}.
It is verified in Section~\ref{subsec:correct} that the proposed mechanism is justified,
followed by detailed discussions in Section~\ref{subsec:optimal} about minimizing the budget,
with the computational complexity analyzed finally in~Section~\ref{subsec:complexity}.

\begin{algorithm}[t]
    \caption{The proposed mechanism built with an option rule $\phi^*$ that satisfies SE~\eqref{eq:SE}.}
    \begin{algorithmic}[1]
        \Function{SelectOption}{$v$} \Comment{Act as the option rule $\phi^*$.}
        \State \Return $\phi^*(v)$. 
        \EndFunction

        \Function{ComputePayments}{$v$} \Comment{Act as the payment rule $\tau^*$.}
        \For{$i\in\mathcal{N}$}
        \For{$v_i'\in\mathcal{V}_i$}
        \State $o(v_i')\gets\textproc{SelectOption}(v_i', v_{-i})$
        \Comment{Find the options for all possible change of $v_i$.}
        \EndFor
        \State $V \gets\mathcal{V}_i\cup\{\star\}$ \Comment{Define the set of vertices.}
        \State $E \gets V \times \mathcal{V}_i$ \Comment{Define the set of directed edges.}
        \For{$v_i'\in\mathcal{V}_i$}
        \State $c(\star, v_i') \gets v_i'(o(v_i'))$ \Comment{Set the weights of edges from $\star$.}
        \EndFor
        \For{$(v_i^{(1)},v_i^{(2)})\in \mathcal{V}_i^2$}
        \State $c(v_i^{(1)},v_i^{(2)}) \gets v_i^{(2)}(o(v_i^{(2)})) - v_i^{(2)}(o(v_i^{(1)}))$
        \Comment{Set the weights of remaining edges.}
        \EndFor
        \State $\tau^*_i(v) \gets -\Call{ShortestDistance}{\star, v_i; (V,E,c)}$
        \Comment{Compute the shortest distance.}
        \EndFor
        \State \Return $(\tau^*_i(v))_{i\in\mathcal{N}}$
        \EndFunction
    \end{algorithmic}
    \label{alg:main}
\end{algorithm}

In $\Call{ComputePayments}{v}$,
the payments $(\tau^*_i(v))_{i\in\mathcal{N}}$ is independently computed for each agent $i\in\mathcal{N}$.
First,
we build a weighted graph $G_i(v)$ that spans the vertex set $\mathcal{V}_i\cup\{\star\}$,
the agent's type domain $\mathcal{V}_i$ plus an auxiliary vertex $\star$.
Then for every $v_i'\in\mathcal{V}_{i}$,
a directed edge from $\star$ to $v_i'$ is equipped with a weight of $v'_i(\phi^*(v'_i, v_{-i}))$.
We also make $\mathcal{V}_i$,
all vertices except $\star$,
mutually connected by adding a directed edge for every ordered pair $(v_i^{(1)},v_i^{(2)})\in\mathcal{V}_i^2$ that weighs $v_i^{(2)}(\phi^*(v_i^{(2)},v_{-i})) - v_i^{(2)}(\phi^*(v_i^{(1)},v_{-i}))$.
Finally,
the oracle $\textproc{ShortestDistance}$ is called to exactly compute the shortest distance from the source node $\star$ to the destination $v_i$ in the graph $G_i(v)$.
The idea comes from the observation that the inequalities required by DSIC~\eqref{eq:DSIC} and IR~\eqref{eq:IR} are seen as a set of dual constraints for the shortest path problem (cf. Appendix~\ref{sec:duality}).
\thref{ex1} in Section~\ref{subsec:optimal} helps get a better sense of how the graph $G_i(v)$ is constructed and involved with the payment rule.
Further examples are provided in Appendix~\ref{sec:example}.

\subsection{Correctness} \label{subsec:correct}

The following lemma guarantees that \textproc{ShortestDistance} successfully computes the shortest distances in the graph $G_i(v)$.

\begin{lemma} \thlabel{lem:no-neg-cyc}
    For any $v\in\mathcal{V}$ and $i\in\mathcal{N}$,
    the weighted directed graph $G_i(v)$ has no negative directed cycle.
\end{lemma}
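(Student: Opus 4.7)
The plan is to show that any directed cycle in $G_i(v)$ must avoid the vertex $\star$, and then to bound each edge weight inside $\mathcal{V}_i$ using the SE property of $\phi^*$ so that the bound telescopes around the cycle.

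First, observe from the edge set construction $E = V \times \mathcal{V}_i$ that the auxiliary vertex $\star$ has no incoming edge, so any directed cycle is confined to the subgraph on $\mathcal{V}_i$. Let such a cycle be $v_i^{(1)} \to v_i^{(2)} \to \cdots \to v_i^{(k)} \to v_i^{(k+1)} = v_i^{(1)}$.

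Next, I would invoke SE of $\phi^*$ at $(v_i^{(j+1)}, v_{-i})$: since $o(v_i^{(j+1)}) = \phi^*(v_i^{(j+1)}, v_{-i})$ maximizes $S(\,\cdot\,; v_i^{(j+1)}, v_{-i})$, we get
\begin{equation*}
    v_i^{(j+1)}(o(v_i^{(j+1)})) + \sum_{l\neq i} v_l(o(v_i^{(j+1)})) \;\ge\; v_i^{(j+1)}(o(v_i^{(j)})) + \sum_{l\neq i} v_l(o(v_i^{(j)})).
\end{equation*}
Rearranging, the edge weight satisfies
\begin{equation*}
    c(v_i^{(j)}, v_i^{(j+1)}) \;=\; v_i^{(j+1)}(o(v_i^{(j+1)})) - v_i^{(j+1)}(o(v_i^{(j)})) \;\ge\; f(v_i^{(j)}) - f(v_i^{(j+1)}),
\end{equation*}
where I abbreviate $f(v_i') \coloneqq \sum_{l\neq i} v_l(o(v_i'))$.

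Summing around the cycle, the right-hand side telescopes to zero, yielding $\sum_{j=1}^{k} c(v_i^{(j)}, v_i^{(j+1)}) \ge 0$, which is exactly the claim that there is no negative directed cycle.

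The only subtlety is making sure the SE inequality is applied with the \emph{reported} type that matches the target vertex of the edge (so that the $v_i^{(j+1)}$-term in the welfare coincides with the one appearing in $c(v_i^{(j)}, v_i^{(j+1)})$); once that alignment is fixed, the telescoping is routine. I expect no serious obstacle beyond this bookkeeping.
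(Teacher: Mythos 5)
Your proof is correct and follows essentially the same route as the paper's: apply SE at the type profile matching the head of each edge, rearrange to bound each edge weight by a difference of the "others' welfare" terms, and telescope around the cycle. (Your justification for why no cycle passes through $\star$ — it has no \emph{incoming} edge — is in fact the accurate one, given $E = V \times \mathcal{V}_i$.)
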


\begin{proof}
    Fix an arbitrary pair of $i\in\mathcal{N}$ and $v\in\mathcal{V}$
    for which we are going to show the claim holds.
    Note that no directed cycle contains the auxiliary vertex $\star$,
    since no edge comes out from it.
    Consider an arbitrary sequence $v_i^{(1)},\ldots,v_i^{(k)}$ in $\mathcal{V}_i$,
    and let $v_i^{(0)} \coloneqq v_i^{(k)}$.
    Let also $v^{(\ell)}\coloneqq (v_i^{(\ell)}, v_{-i}),\forall\ell\in\{0,1,\ldots,k\}$.
    By definition of \textproc{SelectOption},
    it follows that
    \begin{align}
         & v_i^{(\ell)}(\phi^*(v^{(\ell)})) + \sum_{j\in\mathcal{N}\setminus\{i\}} v_{j}(\phi^*(v^{(\ell)})) =   S(\phi^*(v^{(\ell)});v^{(\ell)})          \nonumber                                                              \\
         & \ge  S(\phi^*(v^{(\ell-1)});v^{(\ell)})                                                                       = v_i^{(\ell)}(\phi^*(v^{(\ell-1)})) + \sum_{j\in\mathcal{N}\setminus\{i\}} v_{j}(\phi^*(v^{(\ell-1)})),
         & \forall\ell\in\{1,\ldots,k\}. \label{eq:maxop}
    \end{align}
    Summing up \eqref{eq:maxop} over all $\ell\in\{1,\ldots,k\}$ yields
    \begin{align}
         & \sum_{\ell=1}^k (v_i^{(\ell)}(\phi^*(v^{(\ell)})) - v_i^{(\ell)}(\phi^*(v^{(\ell-1)})))      \nonumber \\
         & \ge \sum_{\ell=1}^k \left(\sum_{j\in\mathcal{N}\setminus\{i\}}  v_{j}(\phi^*(v^{(\ell-1)}))
        -\sum_{j\in\mathcal{N}\setminus\{i\}} v_{j}(\phi^*(v^{(\ell)})) \right) = 0, \label{eq:nonnegative-cycle}
    \end{align}
    where the last equality follows from $v^{(0)} = v^{(k)}$ by definition.
    Eq.~\eqref{eq:nonnegative-cycle} implies that the sum of the edge weights along any directed cycle in $G_i(v)$ is not negative.
\end{proof}

By virtue of \thref{lem:no-neg-cyc},
$-\tau^*_i(v)$ is successfully set to the shortest distance from $\star$ to $v_i$ in the graph $G_i(v)$.
Confirm that $G_i(v)$ does not depend on $i$'s own type $v_i$ by definition i.e.,
\begin{align}
    G_i(v) = G_i(v_i', v_{-i}), \quad \forall i\in\mathcal{N},\forall v\in\mathcal{V},\forall v_i'\in\mathcal{V}_i.
\end{align}
Hence,
$-\tau_i^*(v)$ and $-\tau_i^*(v_i',v_{-i})$ are obtained from the shortest paths on the same graph.
Therefore, the following inequalities hold by definition:
\begin{align}
    -\tau^*_i(v) & \le -\tau^*_i(v_i',v_{-i}) + v_i(\phi^*(v)) - v_i(\phi^*(v_i', v_{-i})),
                 & \forall i\in\mathcal{N},\forall v\in\mathcal{V},\forall v_i'\in\mathcal{V}_i,
\end{align}
which is equivalent to DSIC \eqref{eq:DSIC},
and
\begin{align}
    -\tau_i(v) & \le v_i(\phi^*(v)),
               & \forall i\in\mathcal{N},\forall v\in\mathcal{V},
\end{align}
which is equivalent to IR \eqref{eq:IR}.
Since $\phi^*$ satisfies SE \eqref{eq:SE} by definition,
we obtain \thref{thm:justified} below.

\begin{theorem} \thlabel{thm:justified}
    The proposed mechanism $\mathcal{M}^*$ is justified.
\end{theorem}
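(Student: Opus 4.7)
The plan is to establish separately each of the three defining properties SE, DSIC, and IR. Social efficiency is immediate: \textproc{SelectOption} returns $\phi^*(v)$, and $\phi^*$ is fixed in advance to satisfy \eqref{eq:SE}. All remaining work lies in verifying that the payment rule $\tau^*$ produced by \textproc{ComputePayments} satisfies DSIC \eqref{eq:DSIC} and IR \eqref{eq:IR}.

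First I would invoke \thref{lem:no-neg-cyc} to conclude that $G_i(v)$ contains no negative directed cycle, so \textproc{ShortestDistance} returns a finite value equal to the shortest-path distance from $\star$ to $v_i$, and hence $-\tau_i^*(v)$ is well defined. The structural observation already highlighted after \thref{lem:no-neg-cyc} is crucial: $G_i(v)$ depends on $v$ only through $v_{-i}$, because both the vertex set $\mathcal{V}_i \cup \{\star\}$ and the edge weights are determined solely by $\mathcal{V}_i$ and $\phi^*(\cdot, v_{-i})$; the agent's own true type $v_i$ never enters the construction. Thus, for any fixed $v_{-i}$, the distances giving $-\tau_i^*(v_i, v_{-i})$ and $-\tau_i^*(v_i', v_{-i})$ are computed in one and the same weighted graph, differing only in the destination vertex.

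With this in hand, DSIC and IR fall out directly from the triangle-type inequality for shortest distances. For DSIC, I would bound the shortest $\star \to v_i$ distance by the shortest $\star \to v_i'$ distance extended through the direct edge $(v_i', v_i)$, whose weight was set to $v_i(\phi^*(v)) - v_i(\phi^*(v_i', v_{-i}))$; rearranging yields exactly \eqref{eq:DSIC}. For IR, I would compare the shortest $\star \to v_i$ distance against the single-edge path $\star \to v_i$ of weight $v_i(\phi^*(v))$, obtaining $-\tau_i^*(v) \le v_i(\phi^*(v))$, which is \eqref{eq:IR}. Combining SE, DSIC, and IR gives that $\mathcal{M}^*$ is justified per \thref{def:justified}.

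The only nontrivial content—well-posedness of the shortest-path computation—has already been absorbed into \thref{lem:no-neg-cyc}, so no substantive obstacle remains. The main care I would take is bookkeeping for signs: the algorithm sets $\tau_i^*$ to the \emph{negative} shortest distance, so the standard inequality $d(\star, v_i) \le d(\star, v_i') + c(v_i', v_i)$ must be flipped correctly to recover the utility inequality in \eqref{eq:DSIC} and the non-negativity in \eqref{eq:IR}.
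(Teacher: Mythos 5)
Your proposal is correct and follows essentially the same route as the paper: invoke \thref{lem:no-neg-cyc} for well-posedness, note that $G_i(v)$ is independent of agent $i$'s own type so both distances live in one graph, and then read off DSIC and IR from the triangle inequality applied to the direct edge $(v_i', v_i)$ and the edge $\star \to v_i$, respectively. The sign bookkeeping you flag is exactly how the paper recovers \eqref{eq:DSIC} and \eqref{eq:IR}.
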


\subsection{Optimality} \label{subsec:optimal}

\thref{thm:maximal} claims that our mechanism $\mathcal{M}^*=(\phi^*,\tau^*)$ achieves the minimum budget among all justified mechanisms built with the same option rule~$\phi^*$.

\begin{theorem} \thlabel{thm:maximal}
    The proposed mechanism $\mathcal{M}^*$ satisfies,
    against any justified mechanism $\mathcal{M} = (\phi^*, \tau)$,
    \begin{align}
        \tau^*_i(v)        & \le \tau_i(v),        & \forall i\in\mathcal{N},\forall v\in\mathcal{V}, \label{eq:tau_maximized} \\
        B(v;\mathcal{M}^*) & \le B(v;\mathcal{M}), & \forall v\in\mathcal{V}. \label{eq:rev_maximized}
    \end{align}
\end{theorem}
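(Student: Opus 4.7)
The plan is to recognize that for any justified mechanism $\mathcal{M}=(\phi^*,\tau)$ sharing the same option rule, the DSIC and IR inequalities translate exactly into the linear-programming dual constraints for the shortest-path problem on the graph $G_i(v)$ built in Algorithm~\ref{alg:main}. Once this correspondence is made precise, optimality of $\tau^*$ follows from the standard fact that any feasible potential is dominated pointwise by the shortest-distance function.

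Fix $i\in\mathcal{N}$ and $v\in\mathcal{V}$. First I would define, for any justified $\mathcal{M}=(\phi^*,\tau)$, the function $p_i\colon\mathcal{V}_i\cup\{\star\}\to\mathbb{R}$ by $p_i(v_i')\coloneqq -\tau_i(v_i',v_{-i})$ for $v_i'\in\mathcal{V}_i$ and $p_i(\star)\coloneqq 0$. Rearranging DSIC~\eqref{eq:DSIC} applied to deviations $v_i^{(1)}\to v_i^{(2)}$ (with $v_{-i}$ fixed) yields
\begin{align*}
    p_i(v_i^{(2)}) - p_i(v_i^{(1)}) \le v_i^{(2)}(\phi^*(v_i^{(2)},v_{-i})) - v_i^{(2)}(\phi^*(v_i^{(1)},v_{-i})) = c(v_i^{(1)},v_i^{(2)}),
\end{align*}
which is precisely the edge relaxation inequality for every edge in $\mathcal{V}_i\times\mathcal{V}_i$. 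Similarly, IR~\eqref{eq:IR} gives $p_i(v_i') \le v_i'(\phi^*(v_i',v_{-i})) = c(\star,v_i') = p_i(\star)+c(\star,v_i')$, which is the edge relaxation for every edge leaving the auxiliary source $\star$. Thus $p_i$ is a \emph{feasible potential} on $G_i(v)$ with $p_i(\star)=0$.

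Next, I would invoke the textbook shortest-path argument: telescoping the edge inequalities along any directed $\star\to v_i$ path in $G_i(v)$ shows that $p_i(v_i) \le L$ for the length $L$ of the path; minimizing over paths gives $p_i(v_i) \le d_{G_i(v)}(\star,v_i)$. \thref{lem:no-neg-cyc} guarantees that the shortest distance is well-defined (no negative cycles), and by construction $-\tau^*_i(v) = d_{G_i(v)}(\star,v_i)$. Therefore $-\tau_i(v) = p_i(v_i) \le -\tau^*_i(v)$, which is exactly \eqref{eq:tau_maximized}. Summing over $i\in\mathcal{N}$ on both sides yields the budget inequality \eqref{eq:rev_maximized}.

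The proof is largely bookkeeping: the one step that needs care is keeping signs straight when converting utility-based DSIC/IR inequalities into potential-type constraints, and noting that $G_i(v)$ depends only on $v_{-i}$ so that the whole argument can be carried out agent-by-agent with $v_{-i}$ held fixed. Everything else is an immediate consequence of the shortest-path duality; no additional assumption on $\phi^*$ beyond SE is used.
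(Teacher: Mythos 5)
Your proof is correct and is essentially the paper's argument: both convert DSIC and IR into the edge-relaxation (feasible potential) inequalities on $G_i(v)$ and telescope them along a $\star\to v_i$ path to bound $-\tau_i(v)$ by the shortest distance $-\tau^*_i(v)$. The only cosmetic difference is that the paper telescopes along the explicit shortest path (using equality on subpaths for $\tau^*$), whereas you invoke the standard potential-vs-shortest-distance bound directly; the substance is identical.
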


\begin{proof}
    Fix an arbitrary pair of $i\in\mathcal{N}$ and $v\in\mathcal{V}$ for which we will show \eqref{eq:tau_maximized}.
    Now let a sequence of vertices $(\star, v_i^{(0)}, v_i^{(1)}, \ldots, v_i^{(k)} \coloneqq v_i)$  in the graph $G_i(v)$ be one of the shortest paths from the source $\star$ to the destination $v_i$.
    Again let $v^{(\ell)}\coloneqq (v_i^{(\ell)},v_{-i}),\forall\ell\in\{0,1,\ldots,k\}$.
    Since the vertex $v_i^{(0)}$ is visited right after the source $\star$ along the shortest path,
    we have
    \begin{align}
        -\tau^*_i(v^{(0)})                      = v_i^{(0)}(\phi^*(v^{(0)})).     \label{eq:thm2:1}
    \end{align}
    Similarly,
    since the vertex $v_i^{(\ell)}$ follows right after $v_i^{(\ell-1)}$ along the shortest path,
    we obtain
    \begin{align}
        -\tau^*_i(v^{(\ell)}) - (-\tau^*_i(v^{(\ell-1)})) & = v_i^{(\ell)}(\phi^*(v^{(\ell)})) - v_i^{(\ell)}(\phi^*(v^{(\ell-1)})), \quad\forall\ell\in\{1,\ldots,k\}. \label{eq:thm2:ell}
    \end{align}
    Then \eqref{eq:thm2:1} and \eqref{eq:thm2:ell} add up to
    \begin{align}
        -\tau^*_i(v^{(k)})
        = v_i^{(0)}(\phi^*(v^{(0)})) + \sum_{\ell=1}^{k} (v_i^{(\ell)}(\phi^*(v^{(\ell)})) - v_i^{(\ell)}(\phi^*(v^{(\ell-1)}))). \label{tau_proposed}
    \end{align}
    Next,
    let $\mathcal{M} = (\phi^*, \tau)$ be any justified mechanism.
    Then it follows from DSIC~\eqref{eq:DSIC} that
    \begin{align}
        -\tau_i(v^{(\ell)}) - (-\tau_i(v^{(\ell-1)})) & \le v_i^{(\ell)}(\phi^*(v^{(\ell)})) - v_i^{(\ell)}(\phi^*(v^{(\ell-1)})), \quad\forall \ell\in\{1,\ldots,k\},
    \end{align}
    and from IR~\eqref{eq:IR} that
    \begin{align}
        -\tau_i(v^{(0)}) & \le v_i^{(0)}(\phi^*(v^{(0)})),
    \end{align}
    which are tied together into
    \begin{align}
        -\tau_i(v^{(k)}) \le v_i^{(0)}(\phi^*(v^{(0)})) + \sum_{\ell=1}^{k} (v_i^{(\ell)}(\phi^*(v^{(\ell)})) - v_i^{(\ell)}(\phi^*(v^{(\ell-1)}))). \label{tau_general}
    \end{align}
    Finally, the desired inequality \eqref{eq:tau_maximized}
    follows from \eqref{tau_proposed} and \eqref{tau_general} since $v^{(k)} = v$,
    and then \eqref{eq:tau_maximized} adds up over $\mathcal{N}$ to \eqref{eq:rev_maximized}.
\end{proof}



When there is only one option rule that is SE, the proposed mechanism indeed minimizes the budget among all justified mechanisms, as is formally stated in the following corollary:
\begin{corollary} \thlabel{cor:rev}
    In ``proper'' environments,
    where the social welfare $S(X;v)$ is maximized by a unique option $X\in\mathcal{X}$ for each $v\in\mathcal{V}$,
    the proposed mechanism $\mathcal{M}^*$ achieves the minimum budget $B(v;\mathcal{M})$ that can be obtained by any justified mechanism $\mathcal{M}$.
    Formally,
    we claim the following:
    \begin{align}
        \left|\argmax_{X\in\mathcal{X}} S(X;v)\right| = 1,\ \forall v\in \mathcal{V}\quad
        \Rightarrow\quad B(v; \mathcal{M}^*) = \min_{\mathcal{M}:\,\mathrm{justified}} B(v; \mathcal{M}),\ \forall v\in \mathcal{V}.\label{eq:proper}
    \end{align}
\end{corollary}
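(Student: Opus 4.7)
The plan is to reduce Corollary~1 directly to \thref{thm:maximal} by observing that the ``proper'' assumption completely pins down the option rule. Concretely, every justified mechanism must satisfy SE~\eqref{eq:SE}, i.e., its option rule picks a member of $\argmax_{X\in\mathcal{X}}S(X;v)$ for every $v\in\mathcal{V}$. Under the hypothesis of \eqref{eq:proper}, this set is a singleton for every $v$, so there is exactly one SE option rule. Since $\phi^*$ is by construction an SE option rule, any justified mechanism $\mathcal{M} = (\phi,\tau)$ must satisfy $\phi(v)=\phi^*(v)$ for all $v\in\mathcal{V}$.

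With the option rules forced to coincide, I would then invoke \thref{thm:maximal}: for any such justified $\mathcal{M}=(\phi^*,\tau)$ we have $B(v;\mathcal{M}^*)\le B(v;\mathcal{M})$ for every $v\in\mathcal{V}$. Combined with the fact that $\mathcal{M}^*$ itself is justified (\thref{thm:justified}), this yields
\begin{align*}
B(v;\mathcal{M}^*) = \min_{\mathcal{M}:\,\mathrm{justified}} B(v;\mathcal{M}),\quad \forall v\in\mathcal{V},
\end{align*}
which is exactly \eqref{eq:proper}.

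The main ``obstacle'' here is essentially conceptual rather than technical: one just has to notice that the restriction in \thref{thm:maximal} to mechanisms sharing the same option rule $\phi^*$ is not a genuine restriction under the properness assumption. No additional inequality manipulation is required beyond invoking the two preceding results. If one wanted to be fully explicit, the only subtle point to spell out is that two option rules that always select the same maximizer induce identical utilities $v_i(\phi(v))$, so the DSIC~\eqref{eq:DSIC} and IR~\eqref{eq:IR} inequalities of $\mathcal{M}$ are literally the inequalities used in the proof of \thref{thm:maximal}; this is automatic from $\phi=\phi^*$ pointwise.
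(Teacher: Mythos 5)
Your proposal is correct and matches the paper's (implicit) reasoning: under the properness assumption every justified mechanism's option rule is forced to coincide with $\phi^*$ pointwise, so \thref{thm:maximal} together with \thref{thm:justified} immediately yields \eqref{eq:proper}. No further argument is needed.
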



When multiple option rules satisfy SE, the budget required by the proposed mechanism may vary depending on which option rule is used, even though it is minimized for any given option rule.

\begin{example}[Improper environment] \thlabel{ex1}
    This example provides an improper environment where the proposed mechanism may fail to globally minimize the budget for lack of the assumption in \thref{cor:rev}.
    Let us consider an environment with two agents $\mathcal{N} \coloneqq \{A, B\}$,
    three options $\mathcal{X} \coloneqq \{X_1, X_2, X_3\}$,
    and type domains $\mathcal{V}_A$ and $\mathcal{V}_B$ described in Table~\ref{table1}.

    We configure the option rule~$\phi^*$ such that
    \begin{align}
        \phi^*(v_{A}^{(1)}, v_{B}) \in \argmax_{X\in\mathcal{X}} (v_{A}^{(1)}(X) + v_{B}(X)) & = \{X_1\},                          \\
        \phi^*(v_{A}^{(2)}, v_{B}) \in \argmax_{X\in\mathcal{X}} (v_{A}^{(2)}(X) + v_{B}(X)) & = \{X_2, X_3\}. \label{eq:ex1:phi2}
    \end{align}
    If $X_2$ is selected for $\phi^*(v_{A}^{(2)}, v_{B})$ in~\eqref{eq:ex1:phi2},
    then the payments on the types~$v\coloneqq(v_{A}^{(1)}, v_{B})$ would be
    \begin{align}
        \tau^*_A(v) = \alpha, \quad
        \tau^*_B(v) =  0, \label{eq:ex1:case1}
    \end{align}
    which are computed as shortest distances in the graph shown in Figure~\ref{fig:ex1}\ref{sub@fig:ex1:case1}.
    Otherwise,
    if $X_3$ is selected as $\phi^*(v_{A}^{(2)}, v_{B})$ in~\eqref{eq:ex1:phi2},
    then the graph in Figure~\ref{fig:ex1}\ref{sub@fig:ex1:case2} would tell
    \begin{align}
        \tau^*_A(v) = -\alpha, \quad
        \tau^*_B(v) = 0. \label{eq:ex1:case2}
    \end{align}
    Therefore,
    given the types~$v$,
    the proposed mechanism might end up with the budget $B(v;\mathcal{M}^*)=\tau^*_A(v)+\tau^*_B(v)=\alpha$ due to~\eqref{eq:ex1:case1},
    which is higher than otherwise possible budget of $-\alpha$ in the case of~\eqref{eq:ex1:case2}.
    Under this environment,
    expense to guarantee DSIC~\eqref{eq:DSIC} is intuitively attributed to difference in valuation between agent A's two possible types $\mathcal{V}_{A} = \{v_{A}^{(1)},v_{A}^{(2)}\}$,
    which could be larger with the option~$X_2$ than with~$X_3$.

\end{example}

\begin{figure}[tb]
    \begin{minipage}{0.36\linewidth}
        \tblcaption{An improper environment parameterized with $\alpha > 0$.
            Agent~$A$ can have one of two types $\mathcal{V}_A \coloneqq \{v_{A}^{(1)}, v_{A}^{(2)}\}$,
            while agent~$B$ has only one possible type $\mathcal{V}_B \coloneqq \{v_{B}\}$.
        }
        \centering
        \begin{tabular}{lccc}                                                                       \\\toprule
            \multirow{2}{*}{Type} & \multicolumn{3}{c}{Option}                           \\ \cmidrule{2-4}
                                  & $X_1$                      & $X_2$      & $X_3$      \\ \midrule
            $v_{A}^{(1)}$         & $\alpha$                   & $0$        & $0$        \\ \addlinespace[1.5pt]
            $v_{A}^{(2)}$         & $-3\alpha$                 & $-2\alpha$ & $0$        \\ \addlinespace[1.5pt]
            $v_{B}$               & $0$                        & $0$        & $-2\alpha$ \\ \bottomrule
        \end{tabular}
        \label{table1}
    \end{minipage}
    \hfill
    \begin{minipage}{0.58\linewidth}
        \centering
        \begin{subfigure}{0.49\textwidth}
            \centering
            \begin{adjustbox}{scale=0.7}
                \begin{tikzpicture}[->,>=stealth',shorten >=1pt,auto,node distance=1.5cm,thick,
                        vertex/.style={circle,draw,font=\sffamily\bfseries}
                    ]
                    \node[vertex](s){\Large{$\star$}};
                    \node[vertex](a1)[above left=0.7cm and 1.2cm of s]{$v_{A}^{(1)}$};
                    \node[vertex](a2)[below left=0.7cm and 1.2cm of s]{$v_{A}^{(2)}$};
                    \node[vertex](b)[right=1.2cm of s]{$v_{B}$};

                    \path[every node/.style={font=\sffamily\small}]
                    (s) edge node[above]{$0$} (b)
                    (s) edge node[above right]{$\alpha$} (a1)
                    (s) edge node[below right]{$-2\alpha$} (a2)
                    (a1) edge[bend left] node[right]{$\alpha$} (a2)
                    (a2) edge[bend left] node[left]{$\alpha$} (a1);

                    \node[below left=0.25cm and 0.4cm of a2] (ga_lb) {};
                    \node[above right=0.25cm and 1.25cm of a1] (ga_ru) {};
                    \draw[gray,dotted] (ga_lb) rectangle (ga_ru);
                    \node[below left=0.05cm and 0.05cm of ga_ru,gray] {$G_A$};

                    \node[right=2.6cm of ga_lb] (gb_lb) {};
                    \node[right=2.25cm of ga_ru] (gb_ru) {};
                    \draw[gray,dotted] (gb_lb) rectangle (gb_ru);
                    \node[below left=0.05cm and 0.05cm of gb_ru,gray] {$G_B$};
                \end{tikzpicture}
            \end{adjustbox}
            \caption{$\phi^*(v_{A}^{(2)}, v_{B}) = X_2$.}
            \label{fig:ex1:case1}
        \end{subfigure}
        \begin{subfigure}{0.49\textwidth}
            \centering
            \begin{adjustbox}{scale=0.7}
                \begin{tikzpicture}[->,>=stealth',shorten >=1pt,auto,node distance=1.5cm,thick,
                        vertex/.style={circle,draw,font=\sffamily\bfseries}
                    ]
                    \node[vertex](s){\Large{$\star$}};
                    \node[vertex](a1)[above left=0.7cm and 1.2cm of s]{$v_{A}^{(1)}$};
                    \node[vertex](a2)[below left=0.7cm and 1.2cm of s]{$v_{A}^{(2)}$};
                    \node[vertex](b)[right=1.2cm of s]{$v_{B}$};

                    \path[every node/.style={font=\sffamily\small}]
                    (s) edge node[above]{$0$} (b)
                    (s) edge node[above right]{$\alpha$} (a1)
                    (s) edge node[below right]{$0$} (a2)
                    (a1) edge[bend left] node[right]{$3\alpha$} (a2)
                    (a2) edge[bend left] node[left]{$\alpha$} (a1);

                    \node[below left=0.25cm and 0.4cm of a2] (ga_lb) {};
                    \node[above right=0.25cm and 1.25cm of a1] (ga_ru) {};
                    \draw[gray,dotted] (ga_lb) rectangle (ga_ru);
                    \node[below left=0.05cm and 0.05cm of ga_ru,gray] {$G_A$};

                    \node[right=2.6cm of ga_lb] (gb_lb) {};
                    \node[right=2.25cm of ga_ru] (gb_ru) {};
                    \draw[gray,dotted] (gb_lb) rectangle (gb_ru);
                    \node[below left=0.05cm and 0.05cm of gb_ru,gray] {$G_B$};
                \end{tikzpicture}
            \end{adjustbox}
            \caption{$\phi^*(v_{A}^{(2)}, v_{B}) = X_3$.}
            \label{fig:ex1:case2}
        \end{subfigure}

        \caption{The weighted graph $G_A(v)$ and $G_B(v)$ for the type~$v = (v_{A}^{(1)},v_B)$,
            combined at the auxiliary vertex~$\star$,
            for the two possibilities of (\ref{sub@fig:ex1:case1}) and (\ref{sub@fig:ex1:case2}).}
        \label{fig:ex1}
    \end{minipage}
\end{figure}
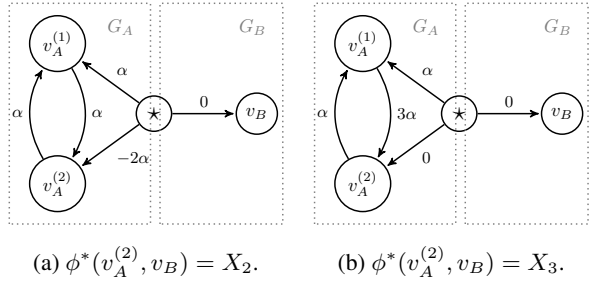

For such improper environments,
the following corollary, which we prove in Appendix~\ref{sec:proof}, suggests that we may search for an option rule among the ones that satisfy SE \eqref{eq:SE} that optimizes some aggregated metric based on the budgets $B(v;\mathcal{M})$ across types $v\in\mathcal{V}$:

\begin{corollary} \thlabel{cor:nonproper}
    Let $\Phi^*$ be the set of all option rules that satisfies SE \eqref{eq:SE}.
    Let $\mathcal{M}^*[\phi^*]$ be the proposed mechanism (Algorithm~\ref{alg:main}) built with an option rule $\phi^*\in\Phi$.
    Then, for any non-decreasing function $f:\mathbb{R}^{\mathcal{V}}\to\mathbb{R}$, we have
    \begin{align}
        \min_{\phi^*\in\Phi^*} f((B(v;\mathcal{M}^*[\phi^*]))_{v\in\mathcal{V}})
         & = \min_{\mathcal{M}:\,\mathrm{justified}} f((B(v;\mathcal{M}))_{v\in\mathcal{V}}).
        \label{eq:nonproper}
    \end{align}
\end{corollary}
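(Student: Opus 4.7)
The plan is to prove the two inequalities between the minima in \eqref{eq:nonproper} separately, both of which reduce to machinery already developed in the paper.

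For the direction $\mathrm{LHS}\ge\mathrm{RHS}$, I would observe that for every $\phi^*\in\Phi^*$ the mechanism $\mathcal{M}^*[\phi^*]$ is justified by \thref{thm:justified}. Hence $\{\mathcal{M}^*[\phi^*]:\phi^*\in\Phi^*\}$ is a subset of the set of justified mechanisms over which the right-hand side is minimized, and so the left-hand minimum is no smaller than the right-hand one.

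For the reverse direction $\mathrm{LHS}\le\mathrm{RHS}$, I would start from an arbitrary justified mechanism $\mathcal{M}=(\phi,\tau)$. By \thref{def:justified}, its option rule $\phi$ satisfies SE, so $\phi\in\Phi^*$. Applying \thref{thm:maximal} to the justified mechanisms $\mathcal{M}^*[\phi]$ and $\mathcal{M}$, which share the same option rule $\phi$, yields $B(v;\mathcal{M}^*[\phi])\le B(v;\mathcal{M})$ for every $v\in\mathcal{V}$. Since $f$ is non-decreasing (coordinate-wise on $\mathbb{R}^\mathcal{V}$), it follows that
\begin{align}
f\bigl((B(v;\mathcal{M}^*[\phi]))_{v\in\mathcal{V}}\bigr)\le f\bigl((B(v;\mathcal{M}))_{v\in\mathcal{V}}\bigr).
\end{align}
The left-hand side is at least the minimum over $\phi^*\in\Phi^*$, hence $\mathrm{LHS}\le f((B(v;\mathcal{M}))_{v\in\mathcal{V}})$. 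Taking the minimum over all justified $\mathcal{M}$ on the right completes this direction.

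The argument is mostly bookkeeping once one notices that \thref{thm:maximal} already upgrades ``minimum budget among justified mechanisms with the same option rule'' to what is needed here. The only subtlety is ensuring that the option rule of every justified mechanism lies in $\Phi^*$, which is immediate from SE being part of the definition of justified. Thus no genuine obstacle arises beyond correctly invoking monotonicity of $f$ and the pointwise optimality of $\mathcal{M}^*[\phi]$ guaranteed by \thref{thm:maximal}.
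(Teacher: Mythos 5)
Your proposal is correct and follows essentially the same route as the paper's proof: both directions rest on \thref{thm:justified} (so that $\{\mathcal{M}^*[\phi^*]\}$ is a subset of the justified mechanisms) and on \thref{thm:maximal} combined with the coordinate-wise monotonicity of $f$ for the nontrivial inequality. The only cosmetic difference is that you fix an arbitrary justified $\mathcal{M}$ and compare it directly with $\mathcal{M}^*[\phi]$, whereas the paper first rewrites $B(v;\mathcal{M}^*[\phi^*])$ as an inner minimum over payment rules and then exchanges $f$ with that minimum; the two arguments are logically identical.
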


For example,
one may let $f((B(v;\mathcal{M}))_{v\in\mathcal{V}}) = \mathbb{E}[B(v;\mathcal{V})]$,
where $\mathbb{E}$ denotes the expectation with respect to an arbitrarily assumed probability distribution over $\mathcal{V}$,
and choose an option rule $\phi^*$ that allows the mechanism $\mathcal{M}^*[\phi^*]$ to minimize the expected budget.

\subsection{Computational complexity} \label{subsec:complexity}

The running time of the proposed mechanism can depend on the size of the environment as well as the running time of the subroutines such as \textproc{SelectOption} and \textproc{ShortestDistance}.  The following theorem formally characterizes the computational complexity of the proposed mechanism and provides a way to reduce the complexity when only a limited number of options can be selected.

\begin{theorem} \thlabel{thm:time}
    \textproc{ComputePayments} runs in $\displaystyle O\left(\sum_{i\in\mathcal{N}}(|\mathcal{V}_i|T_{opt} + |\mathcal{V}_i|^2 T_{eval} + \mathrm{SP}(|\mathcal{V}_i|, |\mathcal{V}_i|^2))\right)$ time,
    where $T_{opt}$ denotes the worst time complexity of \textproc{SelectOption},
    $T_{eval}$ denotes that of evaluating any single value $v_i(X),\forall v_i\in\mathcal{V}_i,\forall i\in\mathcal{N},\forall X\in\mathcal{X}$,
    and $\mathrm{SP}(n,m)$ denotes that of the oracle \textproc{ShortestDistance} given a graph with $n$ vertices and $m$ edges.
    Furthermore,
    a modified implementation reduces the complexity to $\displaystyle O\left(\sum_{i\in\mathcal{N}}(|\mathcal{V}_i|T_{opt} + n_i|\mathcal{V}_i|T_{eval} + \mathrm{SP}(n_i, n_i^2))\right)$,
    with $n_i \coloneqq \displaystyle\max_{v_{-i}\in\mathcal{V}_{-i}}|\{\phi^*(v_i',v_{-i})\mid v_i'\in\mathcal{V}_i\}|, \forall i\in\mathcal{N}$.
\end{theorem}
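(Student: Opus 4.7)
The plan is to analyze both bounds by dissecting the per-iteration work of \textproc{ComputePayments} for a fixed agent $i\in\mathcal{N}$ and then summing over $\mathcal{N}$.

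For the first bound, I would walk through Algorithm~\ref{alg:main} line by line. The loop invoking \textproc{SelectOption} contributes $O(|\mathcal{V}_i|T_{opt})$; the $|\mathcal{V}_i|$ weights $c(\star,v_i')$ each need one evaluation and the $|\mathcal{V}_i|^2$ weights $c(v_i^{(1)},v_i^{(2)})$ each need two, together contributing $O(|\mathcal{V}_i|^2 T_{eval})$; and the single call to \textproc{ShortestDistance} on a graph with $|\mathcal{V}_i|+1$ vertices and $O(|\mathcal{V}_i|^2)$ edges costs $\mathrm{SP}(|\mathcal{V}_i|,|\mathcal{V}_i|^2)$. Summing across agents yields the first claim.

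For the second bound, the key observation is that many distinct types in $\mathcal{V}_i$ may induce the same selected option, so I would collapse $G_i(v)$ into a smaller graph over the attained-option set $O_i\coloneqq\{\phi^*(v_i',v_{-i}):v_i'\in\mathcal{V}_i\}$, of size at most $n_i$. The modified implementation first calls \textproc{SelectOption} for each $v_i'\in\mathcal{V}_i$, partitions $\mathcal{V}_i$ by $o(\cdot)$ into groups $\mathcal{V}_i(X)\coloneqq\{v'\in\mathcal{V}_i:o(v')=X\}$ for $X\in O_i$, builds a compressed graph $G_i'$ with vertex set $O_i\cup\{\star\}$ and edge weights $c'(\star,X)=\min_{v'\in\mathcal{V}_i(X)}v'(X)$ and $c'(X,Y)=\min_{v'\in\mathcal{V}_i(Y)}[v'(Y)-v'(X)]$, calls \textproc{ShortestDistance} on $G_i'$ to obtain $d'(X)$ for every $X\in O_i$, and finally sets $-\tau_i^*(v)=\min\{v_i(o(v_i)),\,\min_{X\in O_i}[d'(X)+v_i(o(v_i))-v_i(X)]\}$.

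The technical crux, which I expect to be the main obstacle, is proving that this reconstruction equals the shortest distance from $\star$ to $v_i$ in $G_i(v)$. I would argue this via a projection/lifting pair. For projection: any directed path in $G_i(v)$ ending at some $v'\in\mathcal{V}_i(X)$ descends to a path in $G_i'$ ending at $X$ whose weight is no larger, because each $c'$ weight is by construction the pointwise minimum of the corresponding $c$ weights over $\mathcal{V}_i(X)$. For lifting: any directed path in $G_i'$ can be lifted to $G_i(v)$ by choosing at each traversed vertex $Y$ a type in $\mathcal{V}_i(Y)$ that attains the minimum in $c'$, yielding a $G_i(v)$-path of identical weight. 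Hence the shortest distance in $G_i(v)$ to any vertex of $\mathcal{V}_i(X)$ equals $d'(X)$. Since the weight of the last edge $(v',v_i)$ in $G_i(v)$ depends on $v'$ only through $o(v')$, minimizing over $X\in O_i$ plus the standalone $v_i(o(v_i))$ term from the direct edge $(\star,v_i)$ reproduces the true shortest distance to $v_i$, which equals $-\tau_i^*(v)$ by \thref{lem:no-neg-cyc} and the correctness proof of Section~\ref{subsec:correct}.

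The complexity accounting is then routine: $O(|\mathcal{V}_i|T_{opt})$ for option lookups; $O(n_i|\mathcal{V}_i|T_{eval})$ for all edge-weight computations, since each of the $O(n_i^2)$ inter-option edges into $Y$ minimizes over $\mathcal{V}_i(Y)$ with $\sum_{Y\in O_i}|\mathcal{V}_i(Y)|=|\mathcal{V}_i|$, iterated over $n_i$ source options; $\mathrm{SP}(n_i,n_i^2)$ for the shortest-path call on $G_i'$; and $O(n_i T_{eval})$ for the final reconstruction, absorbed into the preceding term. Summing over $\mathcal{N}$ delivers the second bound.
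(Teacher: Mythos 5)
Your proposal is correct and follows essentially the same route as the paper: a line-by-line accounting for the first bound, and for the second bound the same contraction of $G_i(v)$ onto the attained-option set with parallel edges replaced by their minimum weight. The only difference is cosmetic: you justify that the contraction preserves the shortest distance via a projection/lifting argument (plus an explicit reconstruction of the distance to $v_i$), whereas the paper notes that types sharing an option are joined by zero-weight edges in both directions and hence already have equal shortest distance from $\star$, so that $-\tau_i^*(v)$ can be read off directly as the distance to the contracted vertex $\phi^*(v)$.
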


\begin{proof}
    For each $i\in\mathcal{N}$,
    the first term denotes the complexity of \textproc{SelectOption} called $|\mathcal{V}_i|$ times,
    followed by the complexity of computing $|\mathcal{V}_i|^2 + |\mathcal{V}_i|$ edge weights,
    and finally by the complexity of the oracle \textproc{ShortestDistance} given the graph $G_i(v)$.

    The improved complexity is obtained by summarizing $G_i(v)$ into a smaller graph $\tilde{G}_i(v)$ of at most $n_i+1$ vertices,
    $\mathcal{X}_{i}(v_{-i})\cup\{\tilde{\star}\}$,
    where we define
    \begin{align}
        \mathcal{X}_{i}(v_{-i}) & \coloneqq \{\phi^*(v_i',v_{-i}) \mid v_i'\in\mathcal{V}_i\}\subseteq\mathcal{X} , & \forall i\in\mathcal{N},\forall v\in\mathcal{V}.
    \end{align}
    The graph $\tilde{G}_i(v)$ is equipped with directed edges for all pairs in $(\mathcal{X}_{i}(v_{-i})\cup\{\tilde{\star}\})\times \mathcal{X}_{i}(v_{-i})$,
    whose weights $\tilde{c}$ are defined as follows:
    \begin{align}
        \tilde{c}(\tilde{\star}, \Phi) & \coloneqq \min\{v_i'(\Phi) \mid v_i'\in\mathcal{V}_i,\ \phi^*(v_i',v_{-i}) = \Phi \},                 & \forall \Phi\in\mathcal{X}_{i}(v_{-i}),              \\
        \tilde{c}(\Phi_1, \Phi_2)      & \coloneqq \min\{v_i'(\Phi_2)-v_i'(\Phi_1) \mid v_i'\in\mathcal{V}_i,\ \phi^*(v_i',v_{-i}) = \Phi_2\}, & \forall (\Phi_1,\Phi_2)\in\mathcal{X}_{i}(v_{-i})^2.
    \end{align}
    From another viewpoint,
    the graph $\tilde{G}_i(v)$ is obtained from $G_i(v)$ by contracting vertices $v_i'\in\mathcal{V}_{i}$ with the same option $\phi^*(v_i', v_{-i})\in\mathcal{X}_{i}(v_{-i})$
    into a single vertex labeled with it,
    while resulting parallel edges in $G_i(v)$ are replaced by an aggregated single edge in $\tilde{G}_i(v)$ with the minimum weight among those of the original edges.
    The desired shortest distance~$-\tau^*_i(v)$ in~$G_i(v)$ from~$\star$ to~$v_i$ remains as the shortest distance in $\tilde{G}_i(v)$ from $\tilde{\star}$ to $\phi^*(v)$,
    because in the original graph $G_i(v)$,
    vertices $v_i'\in\mathcal{V}_i$ with a common $\phi^*(v_i',v_{-i})$
    are mutually and bidirectionally connected by zero-weight edges,
    and thus share an equal shortest distance from $\star$.
\end{proof}

Note $n_i\le \min\{|\mathcal{V}_i|,|\mathcal{X}|\},\forall i\in\mathcal{N}$ and that $\mathrm{SP}(n, m) = O(nm)$ is guaranteed as the current best bound by the Bellman-Ford algorithm~\cite{Madkour2017}.
The present mechanism is thus executed in polynomial time with respect to the size of the type domains $|\mathcal{V}_i|$.
However,
if the number of options $|\mathcal{X}|$ is large compared with $|\mathcal{V}_i|$,
or even infinite,
then $T_{opt}$ would potentially become an intractable bottleneck.
In such cases,
one may preferably leverage some analytical properties about types (e.g., concavity) for specific environments.
The above proof of \thref{thm:time} also implies the following space complexity.

\begin{corollary} \thlabel{cor:space}
    \textproc{ComputePayments} requires $\displaystyle O\left(S_{opt} + S_{eval} + \max_{i\in\mathcal{N}}|\mathcal{V}_i|^2\right)$ space,
    where $S_{opt}$ denotes the worst space complexity of \textproc{SelectOption} and
    $S_{eval}$ denotes that of evaluating any single value $v_i(X),\forall v_i\in\mathcal{V}_i,\forall i\in\mathcal{N},\forall X\in\mathcal{X}$.
    Furthermore,
    the complexity can be reduced to $\displaystyle O\left(S_{opt} + S_{eval} + \max_{i\in\mathcal{N}}n_i^2 \right)$,
    with $n_i \coloneqq \displaystyle\max_{v_{-i}\in\mathcal{V}_{-i}}|\{\phi(v_i',v_{-i})\mid v_i'\in\mathcal{V}_i\}|, \forall i\in\mathcal{N}$.
\end{corollary}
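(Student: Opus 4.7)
The plan is to read off the space bounds directly from the construction already analyzed in the proof of \thref{thm:time}, noting that the outer \texttt{for} loop over $i\in\mathcal{N}$ in \textproc{ComputePayments} is sequential, so per-agent workspace can be recycled across iterations; only the accumulated output vector $(\tau^*_i(v))_{i\in\mathcal{N}}$ persists, and this can in fact be streamed to the caller as each $\tau^*_i(v)$ is produced. Consequently, the working-memory footprint is dominated by (i) the space used by the subroutines \textproc{SelectOption} and value evaluation, plus (ii) the space required to hold and query the graph $G_i(v)$ for a single agent at a time.

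First I would handle the subroutine contribution. The calls to \textproc{SelectOption} and the evaluations of $v_i(X)$ inside the inner loops are made sequentially, so their internal workspaces can likewise be reused between calls. This yields a contribution of $O(S_{opt}+S_{eval})$, independent of $|\mathcal{V}_i|$.

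Second, for the graph itself, $G_i(v)$ has $|\mathcal{V}_i|+1$ vertices and $|\mathcal{V}_i|^2+|\mathcal{V}_i|$ directed edges, so storing its weight table costs $O(|\mathcal{V}_i|^2)$. By \thref{lem:no-neg-cyc} there are no negative cycles, so a standard shortest-path oracle such as Bellman--Ford suffices, and it needs only $O(|\mathcal{V}_i|)$ auxiliary memory beyond the graph; thus the per-agent graph cost is $O(|\mathcal{V}_i|^2)$. Taking the maximum over $i$, since memory is reused across iterations of the outer loop, yields the first bound.

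Third, for the improved bound I would invoke the contracted graph $\tilde{G}_i(v)$ from the proof of \thref{thm:time}, but build it \emph{online}, without ever materializing $G_i(v)$ in full. While iterating over $v_i'\in\mathcal{V}_i$ and fetching $\phi^*(v_i',v_{-i})$, one maintains, for each (ordered) pair of options in $\mathcal{X}_i(v_{-i})$, a running minimum of the corresponding edge weight---an $O(n_i^2)$-entry table updated in $O(1)$ auxiliary space per update. The resulting graph has at most $n_i+1$ vertices and $O(n_i^2)$ edges, so running the shortest-path oracle on it requires $O(n_i^2)$ space. The anticipated subtlety is precisely this online construction: it works because the parallel-edge aggregation rule is a pointwise minimum and can therefore be maintained incrementally in a single pass over $\mathcal{V}_i$, so the $|\mathcal{V}_i|^2$ edge table of $G_i(v)$ never needs to be stored.
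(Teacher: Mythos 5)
Your proof is correct and takes essentially the same route as the paper, which simply observes that the proof of \thref{thm:time} implies these space bounds; your elaboration of the per-agent memory reuse, the $O(|\mathcal{V}_i|^2)$ edge table, and the incremental construction of $\tilde{G}_i(v)$ via running minima is exactly the intended argument. One small caveat on the reduced bound: a strict single pass over $\mathcal{V}_i$ cannot finalize $\tilde{c}(\Phi_1,\Phi_2)$ for an option $\Phi_1$ that is discovered only after some type mapping to $\Phi_2$ has already been processed, but a second pass over $\mathcal{V}_i$ once $\mathcal{X}_i(v_{-i})$ is known repairs this with no asymptotic cost in time or space.
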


\section{Numerical experiments} \label{sec:exp}

Here we empirically investigate the quantitative performance of the proposed mechanism in terms of minimizing the budget.
\thref{thm:VCG-budget_opt} in Section~\ref{sec:vcg} confirms that VCG-budget (\thref{def:VCG-budget}) is optimal among justified VCG mechanisms,
which motivates us to compare the proposed mechanism with VCG-budget as a representative of all justified VCG mechanisms.
The experiments are thus focused on whether and how often the proposed mechanism achieves strictly lower budget than VCG-budget.


Every instance is generated as a pair of an environment $\mathcal{E} = (\mathcal{N},\mathcal{X},\mathcal{V})$ and a list of types $v\in\mathcal{V}$ for which the mechanisms are executed.
The size of the environment $\mathcal{E}$ is parametrized by the number of agents $|\mathcal{N}|\in\{1,2,\ldots,32\}$,
the number of options $|\mathcal{X}|\in\{1,2,\ldots,256\}$,
and the size of type domains $|\mathcal{V}_i|\in\{1,2,\ldots,16\},\forall i\in\mathcal{N}$,
each of which is either specified or randomly drawn when the instance is generated.
Then each value $v_i(X)$ is drawn independently and identically from the discrete uniform distribution over the set of consecutive integers from $-100$ to $100$.
Now that the environment $\mathcal{E}$ is configured, the list of types $v$ is finally chosen from the uniform distribution over $\mathcal{V} = \prod_{i\in\mathcal{N}}\mathcal{V}_i$.
Both VCG-budget and the proposed mechanism use a common option rule for each environment.
It is confirmed that either mechanism runs within one second at worst per instance on a laptop with no GPU,
a single Intel Core i7-11850H @ 2.50GHz,
and 64GB RAM.

Figure~\ref{fig:exp:hist} shows the histogram of the difference in the budget required by the proposed mechanism relative to VCG-budget.  Here, we only show the results with $|\mathcal{N}|=16$ (see Appendix~\ref{sec:more_exp} for other results).  The right-most bar corresponds to the cases where the VCG-budget is also optimal (difference to the proposed mechanism is zero).  It can thus be observed that the proposed mechanism requires strictly lower budget for 88.3\% of the cases in this setting.  While the exact frequency varies depending on the settings, we find that the proposed mechanism generally achieves strictly lower budget than VCG-budget for a large fraction of the instances (see also Appendix~\ref{sec:more_exp}).

\begin{figure}[tb]
    \centering
    \begin{subfigure}{0.245\linewidth}
        \includegraphics[scale=0.5]{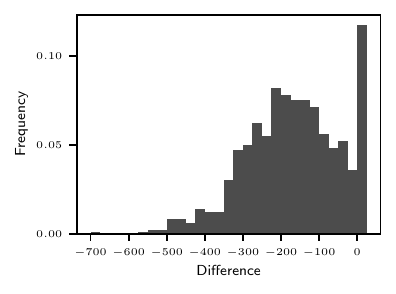}
        \caption{$|\mathcal{N}|=16$}
        \label{fig:exp:hist}
    \end{subfigure}
    \begin{subfigure}{0.245\linewidth}
        \includegraphics[scale=0.5]{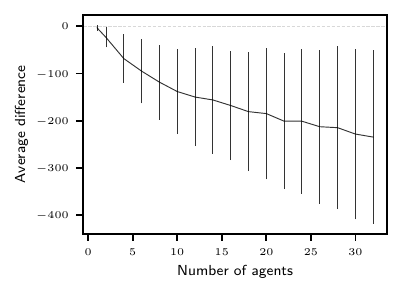}
        \caption{$1\le|\mathcal{N}|\le32$}
        \label{fig:exp:plot_vs_n}
    \end{subfigure}
    \begin{subfigure}{0.245\linewidth}
        \includegraphics[scale=0.5]{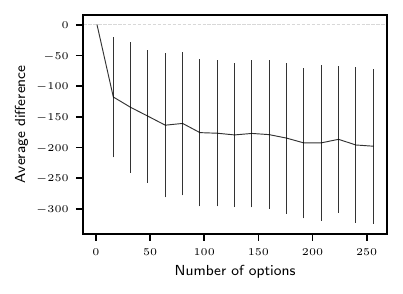}
        \caption{$1\le|\mathcal{X}|\le256$}
        \label{fig:exp:plot_vs_m}
    \end{subfigure}
    \begin{subfigure}{0.245\linewidth}
        \includegraphics[scale=0.5]{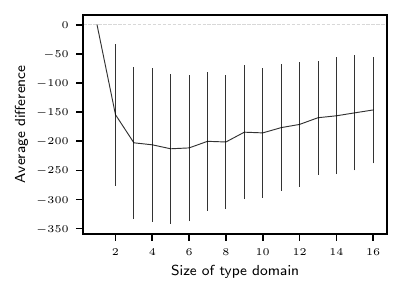}
        \caption{$1\le|\mathcal{V}_i|\le16$}
        \label{fig:exp:plot_vs_d}
    \end{subfigure}
    \caption{
      Difference in budget required by the proposed mechanism relative to VCG-budget, shown as
      a histogram in (\ref{sub@fig:exp:hist})
      as well as the average against the number of agents $|\mathcal{N}|$ in (\ref{sub@fig:exp:plot_vs_n}), 
      the number of options $|\mathcal{X}|$ in (\ref{sub@fig:exp:plot_vs_m}), and      
      the size of type domains $|\mathcal{V}_i|$ in (\ref{sub@fig:exp:plot_vs_d}),
      where $|\mathcal{N}|=16$ is fixed except in (\ref{sub@fig:exp:plot_vs_n}).
    }
    \label{fig:exp}
\end{figure}

Figure~\ref{fig:exp:plot_vs_n}-\ref{fig:exp:plot_vs_d} shows the average difference in the budget when we vary the number of agents $|\mathcal{N}|$ in (\ref{sub@fig:exp:plot_vs_n}), the number of options $|\mathcal{X}|$ in (\ref{sub@fig:exp:plot_vs_m}), and the size of type domain $|\mathcal{V}_i|$ in (\ref{sub@fig:exp:plot_vs_d}).  Each data point is the average over 1,000 random instances, and the error bars show standard deviation.  Overall, the proposed mechanism requires significantly lower budget than VCG-budget for all cases except when $|\mathcal{N}|$, $|\mathcal{X}|$, or $|\mathcal{V}_i|$ is close to one.  In this setting, the relative benefit of the proposed mechanism tends to increase with $|\mathcal{N}|$ and $|\mathcal{X}|$, but the proposed mechanism gives the largest improvement at intermediate values of $|\mathcal{V}_i|$.  See Appendix~\ref{sec:more_exp} for the results with other settings.

\section{Conclusion}

The present study has identified the budget-optimal mechanism among all the mechanisms that satisfies SE, DSIC, and IR in a constructive way coupled with theoretical evaluations.
Moreover,
through numerical experiments,
the proposed mechanism has proved itself beyond the VCG family and demonstrated strictly lower budget than any VCG mechanism for a majority of instances.
However,
there are limitations to the type domains to which this study can be applied, and future studies are open around the extension of our approach to continuous and/or unknown type domains,
where it would be impossible to enumerate the set of inequalities required by DSIC~\eqref{eq:DSIC} and IR~\eqref{eq:IR}.

As we discuss in Appendix~\ref{sec:ama}, the proposed mechanism can be extended to work with the option rule that is an affine maximizer instead of a (unweighted) social-welfare maximizer.  This extended mechanism can minimize the budget among all the mechanisms that maximize the weighted sum of the agents' valuations \emph{for given weights} while satisfying DSIC and IR.  It is an interesting direction of future work to leverage this extended mechanism for the purpose of maximizing the revenue by optimizing the weights similar to the affine maximizer auction.

\clearpage
\bibliographystyle{plain}
\bibliography{reference}


\clearpage
\appendix

\section{Proofs}
\label{sec:proof}

Here, we provide full proofs for the theorems, lemmas, and corollaries whose proofs are omitted in the body of the paper.

\begin{proof}[Proof of \thref{thm:VCG_DSIC}]
    Let $\mathcal{M} = (\phi, \tau)$ be the VCG mechanism implemented with any $(h_i)_{i\in\mathcal{N}}$.
    Then
    \begin{align}
        u_i(v;\mathcal{M}) & = v_i(\phi(v)) + \tau_i(v) \nonumber                                                                          \\
                           & = S(\phi(v);v) - h_i(v_{-i})  \nonumber                                                                       \\
                           & \ge S(\phi(v_i',v_{-i}); v) - h_i(v_{-i}) \nonumber                                                           \\
                           & = v_i(\phi(v_i',v_{-i})) + \sum_{j\in\mathcal{N}\setminus\{i\}}v_j(\phi(v_i',v_{-i})) - h_i(v_{-i}) \nonumber \\
                           & = v_i(\phi(v_i',v_{-i})) + \tau_i(\phi(v_i',v_{-i})),
                           & \forall v_i\in\mathcal{V}_i,\forall i\in\mathcal{N},\forall v\in\mathcal{V},
    \end{align}
    completes the proof.
\end{proof}

\begin{proof}[Proof of \thref{lem:VCG_IR}]
    The first claim~\eqref{eq:Clarke_comp} follows by definition:
    \begin{align}
        \tau^c_i(v) & = \sum_{j\in\mathcal{N}\setminus\{i\}} v_j(\phi(v)) - \max_{X\in\mathcal{X}} \sum_{j\in\mathcal{N}\setminus\{i\}} v_j(X) \le 0, & \forall i\in\mathcal{N},\forall v\in\mathcal{V}.\label{eq:externality}
    \end{align}
    It is also observed that
    \begin{align}
        u_i(v;\mathcal{M}) & = v_i(\phi(v)) + \tau^c_i(v) \nonumber                                                                                      \\
                           & = S(\phi(v);v) - h^c_i(v_{-i}) \nonumber                                                                                    \\
                           & = \max_{X\in\mathcal{X}} \sum_{j\in\mathcal{N}} v_j(X) - \max_{X\in\mathcal{X}} \sum_{j\in\mathcal{N}\setminus\{i\}} v_j(X)
        \ge \inf_{X\in\mathcal{X}} v_i(X),
                           & \forall i\in\mathcal{N},v\in\mathcal{V},
    \end{align}
    which yields IR~\eqref{eq:IR} when \eqref{eq:positive_type} holds.
\end{proof}

\begin{proof}[Proof of \thref{lem:VCG-budget_IR}]
    VCG-budget $\mathcal{M}^{b} = (\phi, \tau^{b})$ satisfies the following:
    \begin{align}
        u_i(v;\mathcal{M}^b) & = v_i(\phi(v)) + \tau^b_i(v) \nonumber                                           \\
                             & = S(\phi(v);v) - \min_{v_i'\in\mathcal{V}_i} S(\phi(v_i', v_{-i}); v_i', v_{-i})
        \ge 0,               & \forall i\in\mathcal{N},\forall v\in\mathcal{V},
    \end{align}
    which directly gives the claim.
\end{proof}

\begin{proof}[Proof of \thref{thm:VCG-budget_opt}]
    Let $\mathcal{M} = (\phi, \tau)$ be any justified VCG mechanism implemented with $(h_i)_{i\in\mathcal{N}}$.
    Then IR gives
    \begin{align}
        0 \le u_i(v_i',v_{-i}; \mathcal{M})
         & = v_i'(\phi(v_i',v_{-i})) + \tau_i(v_i',v_{-i}) \nonumber                                                                                            \\
         & = S(\phi(v_i',v_{-i});v_i',v_{-i}) - h_i(v_{-i}) ,        & \forall v_i'\in\mathcal{V}_i, \forall v_{-i}\in\mathcal{V}_{-i},\forall i\in\mathcal{N},
    \end{align}
    or equivalently summarized with \eqref{eq:VCG-budget} into
    \begin{align}
        0 & \le h^b_i(v_{-i}) - h_i(v_{-i}), & \forall v_{-i}\in\mathcal{V}_{-i},\forall i\in\mathcal{N},
    \end{align}
    which leads to \eqref{eq:vcg_tau_opt} and in particular \eqref{eq:vcg_rev_opt}.
\end{proof}

\begin{proof}[Proof of \thref{cor:nonproper}]
    \thref{thm:maximal} claims
    \begin{align}
        B(v;\mathcal{M}^*[\phi^*])
         & = \min_{(\phi^*,\tau):\,\mathrm{justified}} B(v;(\phi^*,\tau)), & \forall v\in\mathcal{V},\forall \phi^*\in\Phi^*.
    \end{align}
    Since $f$ is non-decreasing, the previous equality implies
    \begin{align}
        f((B(v;\mathcal{M}^*[\phi^*]))_{v\in\mathcal{V}})
         & =
        f\left(\left(\min_{(\phi^*,\tau):\,\mathrm{justified}} B(v;(\phi^*,\tau))\right)_{v\in\mathcal{V}}\right)              \\
         & \le \min_{(\phi^*,\tau):\,\mathrm{justified}} f((B(v;(\phi^*,\tau)))_{v\in\mathcal{V}}), & \forall \phi^*\in\Phi^*.
    \end{align}
    By taking the minimum of each side over $\Phi^*$, we obtain
    \begin{align}
        \min_{\phi^*\in\Phi^*} f((B(v;\mathcal{M}^*[\phi^*]))_{v\in\mathcal{V}})
         & \le \min_{\phi^*\in\Phi^*} \min_{(\phi^*,\tau):\,\mathrm{justified}} f((B(v;(\phi^*,\tau)))_{v\in\mathcal{V}}) \\
         & = \min_{\mathcal{M}:\,\mathrm{justified}} f((B(v;\mathcal{M}))_{v\in\mathcal{V}}). \label{eq:le}
    \end{align}
    Since $\{\mathcal{M}^*[\phi^*]: \phi^*\in\Phi^*\} \subseteq \{\mathcal{M}: \mathrm{justified}
        \}$, we also have
    \begin{align}
        \min_{\phi^*\in\Phi^*} f((B(v;\mathcal{M}^*[\phi^*]))_{v\in\mathcal{V}})
         & \ge \min_{\mathcal{M}:\,\mathrm{justified}} f((B(v;\mathcal{M}))_{v\in\mathcal{V}}),
    \end{align}
    which together with \eqref{eq:le} establishes the corollary.
\end{proof}

\section{Supplementary examples} \label{sec:example}

Examples in this section hopefully provide a clearer view of how the model and the proposed mechanism are applied specifically.

\begin{example}[Continuous options] \thlabel{ex:continuous}
    Let us consider, in a one-dimensional world,
    that the broker plans to host an onsite event and invite limited guests there.
    Suppose that a guest who lives at the location $b\in\mathbb{R}$ incurs a quadratic (physical and/or monetary) cost of $a(x - b)^2 + c$ for a round trip to the venue at $x\in\mathbb{R}$ by some transportation characterized by the constants $a\in\mathbb{R}_{>0}$ and $c\in\mathbb{R}$.
    The broker should locate the venue so that the total cost of guests is minimized,
    while he also wants to expense to the guests as little travel allowance as possible.

    This situation can be formulated as an environment $\mathcal{E} = (\mathcal{N},\mathcal{X},\mathcal{V})$ that consists of the guests~$\mathcal{N}$,
    a geographic range (closed interval) available for the venue~$\mathcal{X} \coloneqq [x_{min},x_{max}]\subseteq\mathbb{R}$,
    and type domains $\mathcal{V}_i \coloneqq \left\{v^{(k)}_i:\mathcal{X}\ni x\mapsto -a^{(k)}_i\left(x - b^{(k)}_i\right)^2 - c^{(k)}_i \in\mathbb{R} \mid k\in\{1,\ldots,d_i\}\right\}$ of size $d_i$ parameterized with $(a^{(k)}_i, b^{(k)}_i, c^{(k)}_i)_{k=1}^{d_i}\in(\mathbb{R}_{>0}\times\mathbb{R}\times\mathbb{R})^{d_i}$ for each $i\in\mathcal{N}$.
    Each guest $i\in\mathcal{N}$ reports his address and transportation as a type $v_i\in\mathcal{V}_i$ and receive allowance $\tau_i(v)$ from the broker,
    who does not want the guest to gain by false declaration.
    Given any types $v = (v^{(k_i)}_i)_{i\in\mathcal{N}}\in\mathcal{V}$,
    the social welfare is equal to the negative total cost of agents:
    \begin{align}
        S(x; v) = \sum_{i\in\mathcal{N}} \left(-a^{(k_i)}_i\left(x - b^{(k_i)}_i\right)^2 - c^{(k_i)}_i\right),
    \end{align}
    which is maximized by the following unique option:
    \begin{align}
        x = \phi^*(v) \coloneqq \max\left\{x_{min}, \min\left\{x_{max}, \dfrac{\sum_{i\in\mathcal{N}} a_i^{(k_i)}b_i^{(k_i)}}{ \sum_{i\in\mathcal{N}}a_i^{(k_i)}}\right\}\right\}.\label{eq:opt_location}
    \end{align}
    While the broker is obliged to maximize the social welfare (SE),
    he also has to incentivize agents to make truthful reports (DSIC)
    and compensate each agent $i$ with $\tau_i(v)$ at least $i$'s travel cost (IR),
    hopefully on the minimum possible budget $\sum_{i\in\mathcal{N}}\tau_i(v)$.
    The proposed mechanism perfectly succeeds along with it in this proper environment owing to \thref{cor:rev}.
\end{example}

\begin{example}[Vickery auction] \thlabel{ex:vickrey}
    Here we consider an auction for a single indivisible item,
    for which each agent makes a bid out of finite biddable prices.
    Formally, consider an environment with $n$ agents $\mathcal{N} \coloneqq \{1,\ldots,n\}$
    and the same number of options $\mathcal{X} \coloneqq \{X_i\mid i\in\mathcal{N}\}$,
    each of which $X_i\in\mathcal{X}$ represents the allocation where the item goes to the agent $i\in\mathcal{N}$.
    Each agent $i\in\mathcal{N}$ has a type domain $\mathcal{V}_i\coloneqq\{v_i^{(1)},\ldots,v_i^{(d)}\}$ of size $d$
    such that
    \begin{align}
        v_i^{(k)}(X_j) = \begin{cases}
                             p_k & \text{if } j=i,   \\
                             0   & \text{otherwise,}
                         \end{cases} \label{ex2_price},
         &  & \forall k\in\{1,\ldots,d\},\forall j\in\mathcal{N},\forall i\in\mathcal{N},
    \end{align}
    where $p_1 > p_2 > \cdots > p_d > 0$ are all biddable prices regardless of agent.

    Although it is an improper setting since multiple options maximize the social welfare if more than one agent make the highest bids,
    we do not lose generality by defining the option rule $\phi^*$ as follows.
    Given an arbitrary list of types $v\in\mathcal{V}$,
    let us define $\phi^*(v)$ as $X_i\in\argmax_{X\in\mathcal{X}} \sum_{i\in\mathcal{N}} v_i(X)$ with the minimum $i\in\mathcal{N}$.
    In other words,
    tie-breaking among agents who bid the highest is won by one with the smallest index.

    Let $v = (v_{i_{\ell}}^{(k_{\ell})})_{\ell\in\mathcal{N}}\in\mathcal{V}$ be an arbitrary list of types,
    where $(i_\ell)_{\ell\in\mathcal{N}}$ is a permutation of $\mathcal{N}$ that is sorted in the lexicographical order on pairs $(k_{\ell}, i_{\ell})$.
    The agent $i_1$ makes the highest bid of $p_{k_1}$,
    followed by $i_2$,
    who makes an equal or less bid of $p_{k_2}$.
    It follows from \eqref{ex2_price} and the definition of $\phi^*$ that
    $\phi^*(v) = X_{i_1}$,
    which closes the auction with only $i_1$ making a successful bid of $p_{k_1}$.

    Now we focus on the payment~$-\tau^*_{i_1}(v)$ from the winner~$i_1$.
    It is equal to the shortest distance from the auxiliary vertex $\star$ to the vertex $v_{i_1}^{(k_1)}$ in the weighted graph~$G_{i_1}(v)$.
    If $i_1 < i_2$,
    the agent $i_1$ would still win if they bid at least $p_{k_2}$;
    otherwise, the agent~$i_2$ would win against~$i_1$ i.e.,
    \begin{align}
        \phi^*(v_{i_1}^{(k)},v_{-{i_1}}) =
        \begin{cases}
            X_{i_1} & \text{if } k \le k_{2}, \\
            X_{i_2} & \text{otherwise}.
        \end{cases} \label{eq:ex2:phi}
    \end{align}
    The graph $G_{i_1}(v)$ is thus constructed so that each edge from $\star$ to $v_{i_1}^{(k)}$ weighs
    \begin{align}
        v_{i_1}^{(k)}(\phi^*(v_{i_1}^{(k)},v_{-{i_1}})) =
        \begin{cases}
            v_{i_1}^{(k)}(X_{i_1}) = p_k & \text{if } k \le k_2, \\
            v_{i_1}^{(k)}(X_{i_2}) = 0   & \text{otherwise},
        \end{cases}
    \end{align}
    and each edge from $v_{i_1}^{(k')}$ to $v_{i_1}^{(k)}$ weighs
    \begin{align}
        v_{i_1}^{(k)}(\phi^*(v_{i_1}^{(k)},v_{-{i_1}})) -  v_{i_1}^{(k)}(\phi^*(v_{i_1}^{(k')},v_{-{i_1}})) =
        \begin{cases}
            p_k  & \text{if } k \le k_2 < k', \\
            -p_k & \text{if } k' \le k_2 < k, \\
            0    & \text{otherwise}.
        \end{cases}
    \end{align}
    Hence (see Figure~\ref{fig:second-price-ge}),
    $-\tau^*_{i_1}(v)$ is the distance along the path $\star\rightarrow v_{i_1}^{(k_2)}\rightarrow v_{i_1}^{(k_1)}$,
    which is equal to $p_{k_2}$,
    the second highest among bid prices.
    Otherwise, if $i_1 > i_2$, the agent $i_2$ would win against $i_1$ also if both bid the same price;
    the option rule~\eqref{eq:ex2:phi} slightly change to
    \begin{align}
        \phi^*(v_{i_1}^{(k)},v_{-{i_1}}) =
        \begin{cases}
            X_{i_1} & \text{if } k < k_{2}, \\
            X_{i_2} & \text{otherwise}.
        \end{cases}
    \end{align}
    Hence (see Figure~\ref{fig:second-price-gg}),
    $-\tau^*_{i_1}(v)$ is the distance along the path $\star\rightarrow v_{i_1}^{(k_2 - 1)}\rightarrow v_{i_1}^{(k_1)}$,
    which is equal to $p_{k_2 - 1}$,
    the lowest among strictly higher biddable prices than $p_{k_2}$.\footnote{Note $p_{k_2-1}$ becomes closer to $p_{k_2}$ as the set of biddable prices gets denser.}

    It follows after similar observations that $\tau^*_i(v) = 0,\forall i\in\mathcal{N}\setminus\{i_1\}$,
    so only $-\tau_{i_1}(v)\in\{p_{k_2-1}, p_{k_2}\}$ contributes to the revenue $-B(v;\mathcal{M}^*)$ at the auction.
    This example implies how the celebrated second-price auction~\cite{Vickrey1961} can be discretized as a special case of the proposed mechanism,
    which realizes the maximum revenue among all justified mechanisms.
\end{example}

\begin{figure}[t]
    \centering
    \begin{subfigure}{0.49\linewidth}
        \centering
        \includegraphics[width=0.7\linewidth]{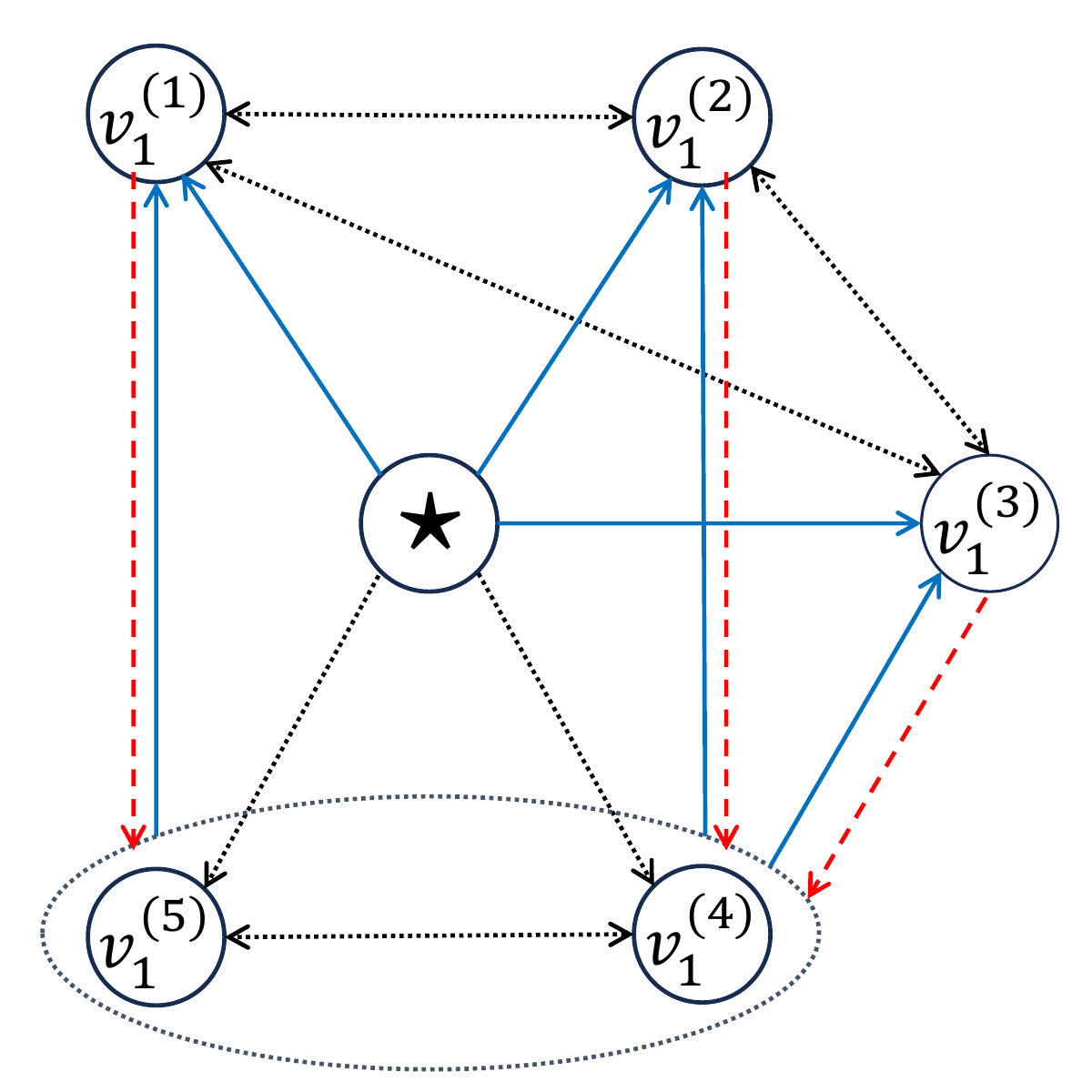}\\
        \caption{$i_1 < i_2$}
        \label{fig:second-price-ge}
    \end{subfigure}
    \begin{subfigure}{0.49\linewidth}
        \centering
        \includegraphics[width=0.7\linewidth]{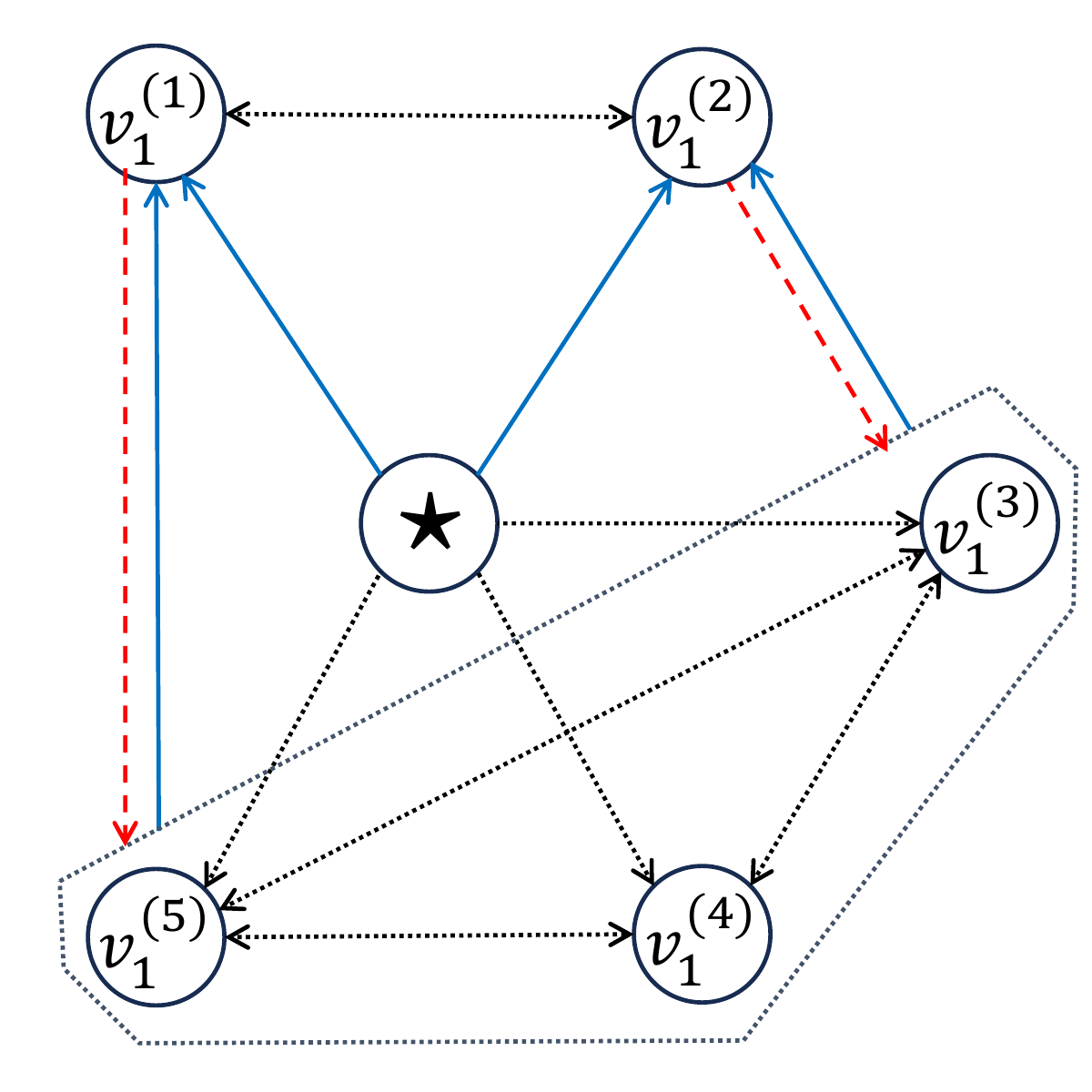}\\
        \caption{$i_1 > i_2$}
        \label{fig:second-price-gg}
    \end{subfigure}
    \caption{An example of the weighted graph $G_{i_1}(v)$ in \thref{ex:vickrey} to compute the payment from the winner $i_1 = 1$,
    where there are $d=5$ types (biddable prices) and the agent $i_2$ makes the second-highest bid of $p_3$ i.e., $k_2 = 3$.
    While not all edges are shown,
    each solid arrow represents an edge with a positive weight,
    each dashed arrow represents an edge with a negative weight,
    and each dotted arrow represents an edge with a zero weight.
    Notice that vertices inside a closed dotted line are mutually connected by zero-weight edges,
    hence they have an equal shortest distance from/to any other vertex.
    The weight of any positive edge coming into $v_1^{(k)}$ is $p_k$, and the weight of any negative edge going out of $v_1^{(k)}$ is $-p_k$.
    In the case (\ref{sub@fig:second-price-ge})~$i_1 < i_2$, one of the shortest paths from $\star$ to $v_1^{(1)}$ could be $\star\to v_1^{(3)} \to v_1^{(1)}$, along which the distance is $p_3$.
    In the other case (\ref{sub@fig:second-price-gg})~$i_1 > i_2$, one of the shortest paths from $\star$ to $v_1^{(1)}$ could be $\star\to v_1^{(2)} \to v_1^{(1)}$, along which the distance is $p_2$.
    }
    \label{fig:second-price}
\end{figure}

\section{LP duality} \label{sec:duality}

Let us fix an option rule $\phi$ according to SE~\eqref{eq:SE}.
Note that DSIC and IR allow us to adjust the payment $\tau_i(v)$ independently for different $i\in\mathcal{N}$ and then for different $v_{-i}\in\mathcal{V}_{-i}$.
We thus focus on deciding $(\pi(v_i)\coloneqq -\tau_i(v_i, v_{-i}))_{v_i\in\mathcal{V}_i}$ for arbitrarily fixed $i\in\mathcal{N}$ and $v_{-i}\in\mathcal{V}_{-i}$.
Using the edge weights $c$ of the graph $G_i(v)$ in Section~\ref{sec:proposed},
we consider the following (primal) LP for an arbitrary $v_i\in\mathcal{V}_i$:
\begin{align}
  \max_{\pi}\qquad \pi(v_i),              &             & \label{eq:LP_max}                                                                       \\
  \text{subject to}\qquad \pi(t) - \pi(s) & \le c(s,t), & \forall s\in\mathcal{V}_i\cup\{\star\},\forall t\in\mathcal{V}_i, \label{eq:LP_DSIC_IR} \\
  \pi(\star)                              & = 0,        & \label{eq:LP_star}
\end{align}
where \eqref{eq:LP_DSIC_IR} with \eqref{eq:LP_star} exactly represents a combination of DSIC~\eqref{eq:DSIC} and IR~\eqref{eq:IR}.
On the other hand,
the dual LP is
\begin{align}
  \min_{f} \qquad \sum_{s\in\mathcal{V}_i\cup\{\star\}, t\in\mathcal{V}_i} c(s,t)f(s,t),                   &                                                                   \\
  \text{subject to}\qquad \sum_{s\in\mathcal{V}_i\cup\{\star\}} f(s,u) - \sum_{t\in\mathcal{V}_i} f(u,t) & =
  \begin{cases}
    1 & \text{if } u = v_i, \\
    0 & \text{otherwise},
  \end{cases}
                                                                                                    & \forall u\in\mathcal{V}_i,                                        \\
  f(s,t)                                                                                            & \ge 0,
                                                                                                    & \forall s\in\mathcal{V}_i\cup\{\star\},\forall t\in\mathcal{V}_i.
\end{align}
It can be seen as the LP relaxation of the shortest path problem in the graph $G_i(v)$,
where $f(s,t)$ represents a fraction of the edge $(s,t)$ that the path passes through.
The primal LP is feasible if and only if the graph $G_i(v)$ has no negative cycle,
in which case the shortest distances $\pi^*(v_i')$ from $\star$ to each $v_i'\in\mathcal{V}_i$ in $G_i(v)$ compose an optimal solution $\pi=\pi^*$ for the primal,
while the shortest path represents an optimal solution for the dual (cf. Section 9.2 in~\cite{korte2018combopt}).
Note that it is essential in our method that the optimal solution to the primal does not depend on the choice of $v_i\in\mathcal{V}_i$,
for which $\pi(v_i) = -\tau_i(v_i, v_{-i})$ is maximized \eqref{eq:LP_max}.

\section{Relation to the affine maximizer auction (AMA)} \label{sec:ama}


With regard to the affine maximizer auction (AMA) mentioned in Section~\ref{sec:related},
this section provides how the proposed mechanism can be applied in that context,
as well as a brief introduction of basic concepts about AMA.
The environment is now assumed to allow no negative value~\eqref{eq:positive_type},
as is the case in single-side auctions,
and possibly infinite type domains unlike our model.
Then we choose an affine maximizer in the following \thref{def:affine_maximizer} as the option rule,
which may not necessarily satisfy SE~\eqref{eq:SE},
the stronger requirement than \eqref{eq:affine_maximizer}.

\begin{definition}[Affine maximizer] \thlabel{def:affine_maximizer}
    For an environment $\mathcal{E}=(\mathcal{N}, \mathcal{X}, \mathcal{V})$,
    an option rule $\phi$ of a mechanism is said to be an \textbf{affine maximizer} if and only if
    \begin{align}
        \phi(v) & \in \arg\max_{X\in\mathcal{X}} \left\{\sum_{i\in\mathcal{N}} w_i v_i(X) + \lambda(X)\right\},
                & \forall v\in\mathcal{V} \label{eq:affine_maximizer}
    \end{align}
    holds for some agent weights $w = (w_i)_{i\in\mathcal{N}}\in\mathbb{R}_{>0}^{\mathcal{N}}$ and some option weight $\lambda:\mathcal{X}\to\mathbb{R}$.
\end{definition}

\thref{thm:Roberts} states a notable fact that motivates mechanisms to adopt affine maximizers~\eqref{eq:affine_maximizer}.
Moreover,
under certain assumptions,
even combinatorial auctions or other restricted environments
allow only ``almost affine maximizers'' to guarantee DSIC~\cite{Lavi2003}.

\begin{theorem}[Roberts~\cite{Roberts1979}] \thlabel{thm:Roberts}
    If all type domains are unrestricted i.e.,
    $\mathcal{V}_i = \mathbb{R}^{\mathcal{X}}$,
    and if there are at least three options i.e.,
    $|\mathcal{X}| \ge 3$,
    then the option rule of any mechanism $(\phi,\tau)$ that satisfies DSIC~\eqref{eq:DSIC}
    must be an affine maximizer for some weights $(w,\lambda)$.
\end{theorem}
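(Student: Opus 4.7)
The plan is to follow Roberts' original strategy, which proceeds in three phases: extract a purely combinatorial monotonicity condition on $\phi$ from DSIC, convert that condition into a hyperplane-separation structure on each agent's type space, and finally use $|\mathcal{X}|\ge 3$ together with the unrestricted-domain assumption to force all these hyperplanes to share a common positive normal direction, which then yields the weights $w$ and offsets $\lambda$.

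First I would eliminate the transfers from the analysis. Apply DSIC \eqref{eq:DSIC} twice, once with the agent's true type being $v_i$ reporting $v_i'$ and once with the roles swapped, for fixed $v_{-i}$. Summing the two inequalities causes the payments $\tau_i(v_i,v_{-i})$ and $\tau_i(v_i',v_{-i})$ to cancel, giving the weak-monotonicity condition that whenever $\phi(v_i,v_{-i}) = X$ and $\phi(v_i',v_{-i}) = Y$, we must have $v_i(X) - v_i(Y) \ge v_i'(X) - v_i'(Y)$. From here on, the argument is about $\phi$ alone. Because $\mathcal{V}_i = \mathbb{R}^{\mathcal{X}}$, one can perturb $v_i(X)$ and $v_i(Y)$ independently of everything else, so weak monotonicity implies that if we raise $v_i(X)$ while holding all other $v_j(X')$ fixed, the rule cannot switch \emph{away} from $X$.

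Second, for each ordered pair of distinct options $(X,Y)$, introduce the difference set $D_{X,Y} := \{(v_i(X)-v_i(Y))_{i\in\mathcal{N}} : \phi(v) = X\} \subseteq \mathbb{R}^{\mathcal{N}}$. Using weak monotonicity one agent at a time together with the unrestricted domain, show that $D_{X,Y}$ is closed under the addition of any componentwise nonnegative vector, and that its complement essentially coincides with $-D_{Y,X}$ (modulo a boundary that can be handled by continuity/tie-breaking). Together with convexity, which follows by mixing type profiles, these properties force $D_{X,Y}$ to be a closed half-space of the form $\{\delta \in \mathbb{R}^{\mathcal{N}} : \sum_i w^{X,Y}_i \delta_i \ge c_{X,Y}\}$ for some strictly positive coefficients $w^{X,Y} \in \mathbb{R}^{\mathcal{N}}_{>0}$ and constant $c_{X,Y} \in \mathbb{R}$.

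The main obstacle is the third step, where one must show that the normals $w^{X,Y}$ are all positive scalar multiples of a single vector $w \in \mathbb{R}^{\mathcal{N}}_{>0}$ that does not depend on the pair of options. This is precisely where $|\mathcal{X}| \ge 3$ is indispensable: given a third option $Z$, construct type profiles for which $\phi$ sits simultaneously on the boundaries of the regions associated with $X,Y,Z$. Such profiles exist because of the unrestricted domain. Comparing the boundary inequalities along the three pairs $(X,Y), (Y,Z), (X,Z)$ at such a triple point yields a cocycle-type relation among $(c_{X,Y},c_{Y,Z},c_{X,Z})$ and forces $w^{X,Y},w^{Y,Z},w^{X,Z}$ to be parallel; iterating this over all triples involving a fixed reference option $X_0$ propagates a single common normal $w$. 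After rescaling so that $w^{X,Y} = w$ for every pair, define $\lambda(X) := -c_{X,X_0}$; the collection of half-space inequalities $\sum_i w_i(v_i(X)-v_i(Y)) \ge c_{X,Y}$ then reads exactly as $\sum_i w_i v_i(X) + \lambda(X) \ge \sum_i w_i v_i(Y) + \lambda(Y)$ for all $Y$, which is the affine maximizer condition \eqref{eq:affine_maximizer}. The delicate point in this final phase is ruling out degenerate cases, for example where some $w_i$ would need to be zero or of the wrong sign; strict positivity of $w$ is recovered from weak monotonicity applied to single-agent deviations that increase $v_i(X)$ only.
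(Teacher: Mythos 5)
The paper does not prove this statement: \thref{thm:Roberts} is quoted verbatim from Roberts~\cite{Roberts1979} as a known classical result, so there is no internal proof to compare against. Your sketch therefore has to stand on its own as a reconstruction of Roberts' argument, and as written it has a genuine structural gap.

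The problem is in your second step. You claim that each difference set $D_{X,Y}$ is a half-space, deriving this from monotonicity, the complementarity with $-D_{Y,X}$, and ``convexity, which follows by mixing type profiles'' --- and you only invoke $|\mathcal{X}|\ge 3$ afterwards, in the normal-alignment step. Convexity does not follow by mixing profiles: if $\phi(v)=X$ and $\phi(v')=X$, nothing in weak monotonicity forces $\phi$ to select $X$ on an interpolated profile, nor would the difference vector of such a profile be the corresponding convex combination. More decisively, your step~2 uses no hypothesis that fails when $|\mathcal{X}|=2$, yet its conclusion is false there: with two alternatives, majority rule (choose $X$ iff a majority of agents have $v_i(X)>v_i(Y)$) satisfies weak monotonicity and is DSIC-implementable, but its region $D_{X,Y}=\{\delta: |\{i:\delta_i>0\}|>n/2\}$ is monotone and complementary without being convex or a half-space. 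So half-spacedness cannot be established before the third alternative enters. In Roberts' original proof and in the Lavi--Mualem--Nisan simplification, the third alternative is used precisely to prove an additivity property of the form $D_{X,Y}+D_{Y,Z}\subseteq \overline{D_{X,Z}}$, and it is this additivity that simultaneously forces each region to be a half-space \emph{and} forces all the normals to coincide. Your third step (triple boundary points and a cocycle relation on the constants $c_{X,Y}$) is in the right spirit for recovering $\lambda$, but it arrives too late to repair the missing justification for the half-space structure itself. To fix the argument you would need to move the use of $|\mathcal{X}|\ge3$ forward and establish the additivity lemma before asserting that the $D_{X,Y}$ are half-spaces.
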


Such an affine maximizer $\phi$ as the option rule
often comes with a payment rule $\tau$ in the following generic form~\eqref{eq:weighted_VCG},
which actually enjoys DSIC~\eqref{eq:DSIC}.
We thus obtain a weighted extension of the VCG mechanism (\thref{def:VCG}).

\begin{definition}[Weighted VCG mechanism] \thlabel{def:weighted_VCG}
    A mechanism $\mathcal{M} = (\phi, \tau)$ is called the \textbf{weighted Vickrey-Clarke-Groves (VCG) mechanism} if and only if the option rule $\phi$ is an affine maximizer for some weights $(w, \lambda)$ and $\mathcal{M}$ satisfies
    \begin{align}
        \tau_i(v) & = \frac{1}{w_i}\left(\sum_{j\in\mathcal{N}\setminus\{i\}}w_j v_j(\phi(v)) + \lambda(\phi(v)) \right) - h_i(v_{-i}),
                  & \forall i\in\mathcal{N} \label{eq:weighted_VCG}
    \end{align}
    with some functions $(h_i:\mathcal{V}_{-i}\rightarrow\mathbb{R})i\in\mathcal{N}$.
\end{definition}

\begin{theorem}[Proposition 1.31 in \cite{Nisan2007}]
    Every weighted VCG mechanism satisfies DSIC~\eqref{eq:DSIC}.
\end{theorem}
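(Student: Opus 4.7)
The plan is to mimic the proof of \thref{thm:VCG_DSIC}, generalized to handle the weights $(w,\lambda)$. Fix any agent $i\in\mathcal{N}$, any true type profile $v\in\mathcal{V}$, and any misreport $v_i'\in\mathcal{V}_i$. The goal is to derive \eqref{eq:DSIC} for the weighted VCG mechanism $\mathcal{M}=(\phi,\tau)$ with payment rule~\eqref{eq:weighted_VCG}.

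First I would write out the truthful utility and absorb the term $v_i(\phi(v))$ into the weighted sum. Concretely, substituting~\eqref{eq:weighted_VCG} into $u_i(v;\mathcal{M})=v_i(\phi(v))+\tau_i(v)$ and multiplying the $v_i(\phi(v))$ term by $w_i/w_i$ gives
\begin{align}
u_i(v;\mathcal{M}) = \frac{1}{w_i}\left(\sum_{j\in\mathcal{N}} w_j v_j(\phi(v)) + \lambda(\phi(v))\right) - h_i(v_{-i}).
\end{align}

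Next I would invoke the defining property of the affine maximizer~\eqref{eq:affine_maximizer}: the option $\phi(v)$ maximizes $\sum_{j\in\mathcal{N}} w_j v_j(X) + \lambda(X)$ over $X\in\mathcal{X}$ when the types are $v$. Plugging in the specific alternative $X=\phi(v_i',v_{-i})$ and using $w_i>0$ yields
\begin{align}
u_i(v;\mathcal{M}) \ge \frac{1}{w_i}\left(w_i v_i(\phi(v_i',v_{-i})) + \sum_{j\in\mathcal{N}\setminus\{i\}} w_j v_j(\phi(v_i',v_{-i})) + \lambda(\phi(v_i',v_{-i}))\right) - h_i(v_{-i}).
\end{align}
The first term equals $v_i(\phi(v_i',v_{-i}))$, and the remainder is precisely $\tau_i(v_i',v_{-i})$ by~\eqref{eq:weighted_VCG}, so the right-hand side collapses to $v_i(\phi(v_i',v_{-i}))+\tau_i(v_i',v_{-i})$, which is~\eqref{eq:DSIC}.

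There is essentially no hard step. The only bookkeeping to watch is that $h_i(v_{-i})$ depends only on $v_{-i}$ so it contributes identically to both the truthful and the misreported sides, and positivity of $w_i$ is needed to preserve the inequality after dividing by $w_i$. Both are built into \thref{def:affine_maximizer} and \thref{def:weighted_VCG}, so the argument goes through exactly as in the unweighted case.
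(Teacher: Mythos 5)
Your proof is correct and is exactly the standard argument: the paper itself does not spell out a proof for this weighted statement (it cites Proposition 1.31 of \cite{Nisan2007}), but your derivation is the direct generalization of the paper's own proof of \thref{thm:VCG_DSIC}, absorbing $v_i(\phi(v))$ into the weighted sum, applying the affine-maximizer property at $X=\phi(v_i',v_{-i})$, and dividing by $w_i>0$. Nothing is missing.
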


Furthermore,
the following \thref{thm:VCG_unique} implies that this VCG family is the unique truthful solution in environments with a wide class of continuous domains.

\begin{theorem}[Theorem 1.37 in \cite{Nisan2007}] \thlabel{thm:VCG_unique}
    If every participant $i$'s type domain $\mathcal{V}_i$ is connected in the $L^2$ space,
    and if
    both two mechanism $(\phi, \tau)$ and $(\phi, \tau')$ satisfy DSIC~\eqref{eq:DSIC},
    then
    \begin{align}
        \tau'_i(v) - \tau_i(v) & = h_i(v_{-i}),
                               & \forall i\in\mathcal{N},\forall v\in\mathcal{V},
    \end{align}
    with some functions $(h_i:\mathcal{V}_{-i}\rightarrow\mathbb{R})i\in\mathcal{N}$.
\end{theorem}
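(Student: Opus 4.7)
The plan is to leverage the convexity of indirect utility functions induced by DSIC, together with the classical convex-analytic fact that two convex functions with coincident subgradients on a connected set differ only by a constant. Fix $i\in\mathcal{N}$ and $v_{-i}\in\mathcal{V}_{-i}$, and view each $v_i$ as ranging in $\mathcal{V}_i\subseteq L^2(\mathcal{X})$. Define the indirect utility functions $u_i(v_i)\coloneqq v_i(\phi(v_i,v_{-i}))+\tau_i(v_i,v_{-i})$ and $u_i'(v_i)\coloneqq v_i(\phi(v_i,v_{-i}))+\tau_i'(v_i,v_{-i})$. By DSIC~\eqref{eq:DSIC}, each admits the representation $u_i(v_i)=\sup_{v_i'\in\mathcal{V}_i}\{v_i(\phi(v_i',v_{-i}))+\tau_i(v_i',v_{-i})\}$ (and analogously for $u_i'$), which is a pointwise supremum of affine functions of $v_i$ in the Hilbert space and therefore convex and lower semi-continuous on $L^2(\mathcal{X})$.

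Next I would compare the subdifferentials of $u_i$ and $u_i'$. Since the supremum is attained at the truthful report $v_i'=v_i$ by DSIC, the envelope theorem yields $e_{\phi(v_i,v_{-i})}\in\partial u_i(v_i)$ for every $v_i\in\mathcal{V}_i$, where $e_X\in L^2(\mathcal{X})$ denotes the indicator/evaluation functional associated with option $X$; an identical argument gives $e_{\phi(v_i,v_{-i})}\in\partial u_i'(v_i)$. Crucially, both mechanisms share the very same option rule $\phi$, so these two subgradients coincide at every point of $\mathcal{V}_i$, and at the dense set of points where both convex functions are Gateaux differentiable their gradients agree exactly.

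Finally, I would invoke the fact that two convex functions with coincident subgradients on a connected set differ by a constant there. Extending $u_i$ and $u_i'$ globally to $L^2(\mathcal{X})$ via the same suprema formulas (which preserve convexity), the $L^2$-connectedness of $\mathcal{V}_i$ forces $u_i-u_i'$ to be constant on $\mathcal{V}_i$; denote this constant $-h_i(v_{-i})$. Then $\tau_i'(v)-\tau_i(v)=u_i'(v)-u_i(v)=h_i(v_{-i})$, which is the claimed identity. The hardest step will be the rigorous passage from coincident subgradients to a constant difference under the mild hypothesis of $L^2$-connectedness alone (rather than openness or convexity) of $\mathcal{V}_i$: the standard convex-analytic constancy theorems prefer open, locally convex domains so that directional derivatives can be integrated cleanly along admissible paths, and for a general connected subset of $L^2$ one must either pass to the convex hull (where the extensions remain valid and gradient coincidence is globally transportable by line integrals) or separately upgrade connectedness to path-connectedness inside $\mathcal{V}_i$.
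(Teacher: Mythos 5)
First, a point of reference: the paper does not prove this statement at all --- it is quoted verbatim as Theorem 1.37 of Nisan et al.\ (2007) as background for Appendix~\ref{sec:ama} --- so your attempt can only be measured against the standard proof in that source. Your convex-analytic route (indirect utility as a supremum of affine functions, allocation as a subgradient, constancy of the difference) is the classical Rochet/Krishna--Maenner/Milgrom--Segal argument, and its first two steps are sound. The gap is exactly where you flag it, and neither of your proposed repairs closes it. Sharing \emph{one} subgradient at each point of $\mathcal{V}_i$ is far weaker than having identical subdifferential maps on an open set (which is what Rockafellar-type constancy theorems require), and the identity $e_{\phi(w,v_{-i})}\in\partial u_i(w)\cap\partial u_i'(w)$ is available only at $w\in\mathcal{V}_i$: at points of $\mathrm{conv}(\mathcal{V}_i)\setminus\mathcal{V}_i$ the option rule is undefined and the two subdifferentials need not intersect, so line integrals through the convex hull transport nothing. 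Indeed, if the convex-hull trick worked it would prove the conclusion for arbitrary, even disconnected, domains, which is false: take one agent, two options $a,b$, and $\mathcal{V}_i=\{v^1,v^2\}$ with $(v^1(a),v^1(b))=(1,0)$, $(v^2(a),v^2(b))=(3,0)$, $\phi(v^1)=b$, $\phi(v^2)=a$; then both $(\tau(v^2),\tau(v^1))=(-2,0)$ and $(\tau'(v^2),\tau'(v^1))=(-1,0)$ satisfy DSIC, yet $\tau'(v^2)-\tau(v^2)=1\neq 0=\tau'(v^1)-\tau(v^1)$, while $\mathrm{conv}(\mathcal{V}_i)$ is a connected, convex segment. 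The path-connectedness variant both strengthens the hypothesis (connected does not imply path-connected) and still needs $u_i\circ\gamma$ to be differentiable a.e.\ with derivative $\langle e_{\phi(\gamma(t),v_{-i})},\gamma'(t)\rangle$; the one-dimensional convexity argument that delivers this works only for straight segments lying inside the domain, i.e.\ essentially for convex or star-shaped $\mathcal{V}_i$.

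So what your argument actually establishes is the theorem for \emph{convex} type domains. The cited theorem for merely connected domains is proved in Nisan (2007) by a different, elementary topological argument: if $x\in\mathcal{V}_i$ with $\phi(x,v_{-i})=a$ lies in the closure of $\{y\in\mathcal{V}_i:\phi(y,v_{-i})=b\}$, then the incentive constraints between $a$ and $b$ are tight at $x$ in both directions, forcing $\tau_a-\tau_b=x(b)-x(a)=\tau'_a-\tau'_b$ and hence $\tau_a-\tau'_a=\tau_b-\tau'_b$ (payments depend only on the selected option once $v_{-i}$ is fixed); the level sets of the map $a\mapsto\tau_a-\tau'_a$ then partition $\mathcal{V}_i$ into finitely many relatively closed pieces, which connectedness forces to be a single piece. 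To repair your write-up, either add convexity of $\mathcal{V}_i$ to the hypotheses, or replace your final step with this boundary-tightness argument.
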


The Clarke pivot rule (Definition~\ref{def:Clarke}) is similarly extended as follows,
which provides a mechanism called the affine maximizer auction (AMA).
Most studies about maximizing revenue treat the weights $(w, \lambda)$ as parameters of AMA and optimize them to increase the revenue in analytical or heuristic approach.

\begin{definition}[AMA] \thlabel{def:Clarke}
    The \textbf{affine maximizer auction (AMA)} specifies the payment rule~\eqref{eq:weighted_VCG} of a VCG mechanism as
    \begin{align}
        h_i(v_{-i}) & = \frac{1}{w_i}\max_{X\in\mathcal{X}} \left(\sum_{j\in\mathcal{N}\setminus\{i\}}w_j v_j(X) + \lambda(X) \right),
                    & \forall i\in\mathcal{N},\forall v\in\mathcal{V}.
    \end{align}
\end{definition}

\begin{lemma} \thlabel{lem:AMA_IR}
    The AMA $\mathcal{M} = (\phi,\tau)$ makes every agent pay to the broker:
    \begin{align}
        \tau_i(v) & \le 0,
                  & \forall i\in\mathcal{N},\forall v\in\mathcal{V}.\label{eq:ama_no_comp}
    \end{align}
    It satisfies IR~\eqref{eq:IR} if all possible types are non-negative~\eqref{eq:positive_type}.
\end{lemma}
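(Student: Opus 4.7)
The plan is to prove the two claims separately, directly mimicking the proof of \thref{lem:VCG_IR} for VCG-Clarke. For the first claim \eqref{eq:ama_no_comp}, I would substitute the definition of $h_i$ under \thref{def:Clarke} into \eqref{eq:weighted_VCG} to obtain
\begin{align}
w_i \tau_i(v) = \left(\sum_{j\in\mathcal{N}\setminus\{i\}} w_j v_j(\phi(v)) + \lambda(\phi(v))\right) - \max_{X\in\mathcal{X}}\left(\sum_{j\in\mathcal{N}\setminus\{i\}} w_j v_j(X) + \lambda(X)\right),
\end{align}
which is non-positive because the first parenthesized expression is one feasible value of the maximand in the second. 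Since $w_i > 0$, we conclude $\tau_i(v)\le 0$.

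For the IR claim, I would follow the same algebraic template as \thref{lem:VCG_IR}. Multiplying the utility $u_i(v;\mathcal{M}) = v_i(\phi(v)) + \tau_i(v)$ by $w_i$ and rearranging gives
\begin{align}
w_i u_i(v;\mathcal{M}) = \left(\sum_{j\in\mathcal{N}} w_j v_j(\phi(v)) + \lambda(\phi(v))\right) - \max_{X\in\mathcal{X}}\left(\sum_{j\in\mathcal{N}\setminus\{i\}} w_j v_j(X) + \lambda(X)\right).
\end{align}
The key step is to exploit the fact that $\phi(v)$ is an affine maximizer \eqref{eq:affine_maximizer}, so the first parenthesized term dominates the corresponding sum evaluated at any other option. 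Picking in particular a maximizer $X^* \in \argmax_{X\in\mathcal{X}}\left(\sum_{j\in\mathcal{N}\setminus\{i\}} w_j v_j(X) + \lambda(X)\right)$ of the second term yields the cancellation
\begin{align}
w_i u_i(v;\mathcal{M}) \ge w_i v_i(X^*) + \left(\sum_{j\in\mathcal{N}\setminus\{i\}} w_j v_j(X^*) + \lambda(X^*)\right) - \max_{X\in\mathcal{X}}\left(\sum_{j\in\mathcal{N}\setminus\{i\}} w_j v_j(X) + \lambda(X)\right) = w_i v_i(X^*),
\end{align}
and the assumption \eqref{eq:positive_type} together with $w_i>0$ gives $u_i(v;\mathcal{M})\ge 0$.

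I do not anticipate any real obstacle: both parts are algebraic manipulations exactly analogous to the unweighted VCG-Clarke argument, with the weights $w_i$ and the option-weight $\lambda$ handled symmetrically by treating $\sum_{j}w_j v_j(X) + \lambda(X)$ as the effective social welfare the affine maximizer optimizes. The only mild subtlety is ensuring the cancellation step in the IR argument is performed with a \emph{specific} maximizer $X^*$ of the ``without-$i$'' objective, rather than simply bounding the maximum by the value at $\phi(v)$, since the latter bound would not suffice when $v_i(\phi(v))$ could be large but $\phi(v)$ is not the ``without-$i$'' maximizer.
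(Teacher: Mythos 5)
Your proposal is correct and follows essentially the same route as the paper: both parts substitute the AMA pivot term into \eqref{eq:weighted_VCG}, use $w_i>0$, and for IR exploit that $\phi(v)$ maximizes the full affine objective so that the difference of the two maxima is bounded below by $w_i v_i(X^*)$ for a maximizer $X^*$ of the without-$i$ objective (the paper states this bound as $\inf_{X\in\mathcal{X}} w_i v_i(X)$, which is the same cancellation made uniform over $X$). No gap.
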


\begin{proof}
    Recall that $w_i > 0,\forall i\in\mathcal{N}$.
    The first claim~\eqref{eq:ama_no_comp} follows by definition:
    \begin{align}
        & w_i\tau_i(v) \nonumber \\
        & = \sum_{j\in\mathcal{N}\setminus\{i\}} w_j v_j(\phi(v)) + \lambda(\phi(v)) - \max_{X\in\mathcal{X}} \left(\sum_{j\in\mathcal{N}\setminus\{i\}} w_j v_j(X) + \lambda(X)\right) \nonumber                                                                                   \\
                     & \le 0,                                                                                                                                                                                  & \forall i\in\mathcal{N},\forall v\in\mathcal{V}.\label{eq:externality_weighted}
    \end{align}
    It is also observed that
    \begin{align}
        & w_i u_i(v;\mathcal{M}) \nonumber \\
        & = w_iv_i(\phi(v)) + w_i\tau_i(v) \nonumber                                                                                              \\
                           & = \sum_{j\in\mathcal{N}} w_j v_j(\phi(v)) + \lambda(\phi(v)) - h_i(v_{-i}) \nonumber                                              \\
                           & = \max_{X\in\mathcal{X}} \left(\sum_{j\in\mathcal{N}} w_j v_j(X) + \lambda(X) \right) 
                           - \max_{X\in\mathcal{X}} \left( \sum_{j\in\mathcal{N}\setminus\{i\}} w_j v_j(X) + \lambda(X) \right) \nonumber \\
                           & \ge \inf_{X\in\mathcal{X}} w_i v_i(X),
                           & \forall i\in\mathcal{N},v\in\mathcal{V},
    \end{align}
    which yields IR~\eqref{eq:IR} when \eqref{eq:positive_type} holds.
\end{proof}




We show that the proposed mechanism (Algorithm~\ref{alg:main}) satisfies DSIC~\eqref{eq:DSIC} and IR~\eqref{eq:IR},
even when it is built with any affine maximizer~\eqref{eq:affine_maximizer} as follows.
Based on \thref{cor:no-neg-cyc-weighted}, an extension of \thref{lem:no-neg-cyc},
\thref{cor:justified_weighted} and \thref{cor:maximal_weighted} are derived from exactly the same discussions to obtain \thref{thm:justified} and \thref{thm:maximal}, respectively.
In particular,
\thref{cor:maximal_weighted} suggests that the proposed mechanism may boost the revenue after the optimal weights are found by AMA.

\begin{corollary} \thlabel{cor:no-neg-cyc-weighted}
    For any $v\in\mathcal{V}$ and $i\in\mathcal{N}$,
    the weighted directed graph $G_i(v)$ has no negative directed cycle,
    if only the option rule $\phi^*$ is an affine maximizer.
\end{corollary}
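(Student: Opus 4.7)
The plan is to mimic the proof of \thref{lem:no-neg-cyc} verbatim, with the only substantive change being the inequality used at each step: the global welfare optimality of $\phi^*$ is replaced by the weighted-plus-offset optimality guaranteed by the affine maximizer condition~\eqref{eq:affine_maximizer}. The edge-weight definitions in $G_i(v)$ involve only agent $i$'s valuation of the selected options, so I would not need to rebuild the graph; I only need to re-derive the cycle inequality using the new optimality property.

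First I would fix $i\in\mathcal{N}$ and $v\in\mathcal{V}$, observe that cycles cannot traverse $\star$, and pick an arbitrary cyclic sequence $v_i^{(1)},\ldots,v_i^{(k)}$ in $\mathcal{V}_i$ with $v_i^{(0)}\coloneqq v_i^{(k)}$, writing $v^{(\ell)}\coloneqq(v_i^{(\ell)},v_{-i})$. Since $\phi^*$ is an affine maximizer for some weights $(w,\lambda)$ with $w_i>0$, at each type $v^{(\ell)}$ the option $\phi^*(v^{(\ell)})$ attains the maximum of $\sum_{j\in\mathcal{N}} w_j v_j(X) + \lambda(X)$, so in particular
\begin{align}
 & w_i v_i^{(\ell)}(\phi^*(v^{(\ell)})) + \sum_{j\in\mathcal{N}\setminus\{i\}} w_j v_j(\phi^*(v^{(\ell)})) + \lambda(\phi^*(v^{(\ell)})) \nonumber \\
 & \ge w_i v_i^{(\ell)}(\phi^*(v^{(\ell-1)})) + \sum_{j\in\mathcal{N}\setminus\{i\}} w_j v_j(\phi^*(v^{(\ell-1)})) + \lambda(\phi^*(v^{(\ell-1)})).
\end{align}

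Next I would divide by $w_i>0$ (the key reason for the positivity assumption on weights) and rearrange so the left-hand side is exactly the edge weight $c(v_i^{(\ell-1)},v_i^{(\ell)}) = v_i^{(\ell)}(\phi^*(v^{(\ell)})) - v_i^{(\ell)}(\phi^*(v^{(\ell-1)}))$. Summing the resulting inequality over $\ell\in\{1,\ldots,k\}$ telescopes the right-hand side: both $\sum_{j\in\mathcal{N}\setminus\{i\}} w_j v_j(\phi^*(v^{(\ell)}))/w_i$ and $\lambda(\phi^*(v^{(\ell)}))/w_i$ appear once with a positive sign and once with a negative sign around the cycle (using $v^{(0)}=v^{(k)}$), so the total collapses to zero. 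Hence $\sum_{\ell=1}^k c(v_i^{(\ell-1)},v_i^{(\ell)})\ge 0$, matching \eqref{eq:nonnegative-cycle}.

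The only real content beyond \thref{lem:no-neg-cyc} is the cancellation of the option weight $\lambda$, but this is automatic because $\lambda$ depends only on the chosen option, so around a closed walk each term contributes equally with both signs. I therefore expect no genuine obstacle: positivity of $w_i$ is what makes the division step legitimate, and the rest is a direct structural mimicry of the original telescoping argument.
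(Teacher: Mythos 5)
Your proposal is correct and matches the paper's proof essentially verbatim: both replace the welfare-maximization inequality with the affine-maximizer inequality, telescope around the cycle so the $\lambda$ terms and the weighted other-agent valuations cancel, and invoke $w_i>0$ to conclude the edge-weight sum is non-negative (the paper merely divides by $w_i$ after summing rather than before, which is immaterial). No gaps.
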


\begin{proof}
    Suppose the proposed mechanism is built with any affine maximizer $\phi^*$ for some weights $(w,\lambda)$.
    Fix an arbitrary pair of $i\in\mathcal{N}$ and $v\in\mathcal{V}$
    for which we are going to show the claim holds.
    Note that no cycle contains the auxiliary vertex $\star$,
    since no edge comes out from it.
    Consider an arbitrary sequence $v_i^{(1)},\ldots,v_i^{(k)}$ in $\mathcal{V}_i$,
    and let $v_i^{(0)} \coloneqq v_i^{(k)}$.
    Let also $v^{(\ell)}\coloneqq (v_i^{(\ell)}, v_{-i}),\forall\ell\in\{0,1,\ldots,k\}$.
    By definition of \textproc{SelectOption},
    it follows that
    \begin{align}
         & w_i v_i^{(\ell)}(\phi^*(v^{(\ell)})) + \sum_{j\in\mathcal{N}\setminus\{i\}} w_j v_{j}(\phi^*(v^{(\ell)})) + \lambda(\phi^*(v^{(\ell)})) \nonumber  \\
         & \ge w_i v_i^{(\ell)}(\phi^*(v^{(\ell-1)})) + \sum_{j\in\mathcal{N}\setminus\{i\}} w_j v_{j}(\phi^*(v^{(\ell-1)})) + \lambda(\phi^*(v^{(\ell-1)})),
         & \forall\ell\in\{1,\ldots,k\}. \label{eq:maxop-weighted}
    \end{align}
    Summing up \eqref{eq:maxop-weighted} over all $\ell\in\{1,\ldots,k\}$ yields
    \begin{align}
         & w_i\sum_{\ell=1}^k (v_i^{(\ell)}(\phi^*(v^{(\ell)})) - v_i^{(\ell)}(\phi^*(v^{(\ell-1)})))      \nonumber                                                         \\
         & \ge \sum_{\ell=1}^k \left( \sum_{j\in\mathcal{N}\setminus\{i\}} w_j v_{j}(\phi^*(v^{(\ell-1)})) + \lambda(\phi^*(v^{(\ell-1)}))\right.  \nonumber \\
         & \quad  \left. -\sum_{j\in\mathcal{N}\setminus\{i\}} w_j v_{j}(\phi^*(v^{(\ell)})) - \lambda(\phi^*(v^{(\ell)})) \right) \nonumber                              \\
         & = 0, \label{eq:nonnegative-cycle-weighted}
    \end{align}
    where the last equality follows from $v^{(0)} = v^{(k)}$ by definition.
    Since $w_i > 0$,
    Eq.~\eqref{eq:nonnegative-cycle-weighted} implies that the sum of the edge weights along any cycle in $G_i(v)$ is not negative.
\end{proof}

\begin{theorem} \thlabel{cor:justified_weighted}
    The proposed mechanism built with an affine maximizer satisfies DSIC~\eqref{eq:DSIC} and IR~\eqref{eq:IR}.
\end{theorem}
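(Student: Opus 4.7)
The plan is to mirror the correctness argument of Section~\ref{subsec:correct} essentially verbatim, with \thref{cor:no-neg-cyc-weighted} playing the role that \thref{lem:no-neg-cyc} played before. The key observation is that the derivation of DSIC and IR in Section~\ref{subsec:correct} never actually invokes SE; SE is used only at the very end to package everything into ``justified.'' So the weaker conclusion we need here (just DSIC and IR) should go through under the weaker hypothesis that $\phi^*$ is an affine maximizer.

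Concretely, the only properties of $\phi^*$ that the original argument exploits are (i) $G_i(v)$ has no negative directed cycle, so that \textproc{ShortestDistance} returns well-defined shortest distances, and (ii) $G_i(v)$ does not depend on $v_i$, i.e., $G_i(v_i',v_{-i}) = G_i(v)$ for every $v_i'\in\mathcal{V}_i$. Property~(i) for an affine-maximizer option rule is exactly what \thref{cor:no-neg-cyc-weighted} provides. Property~(ii) is a syntactic fact about Algorithm~\ref{alg:main}: the edges and edge weights of $G_i(v)$ are constructed from the quantities $v_i'(\phi^*(v_i',v_{-i}))$ and $v_i^{(2)}(\phi^*(v_i^{(\ell)},v_{-i}))$, none of which mention the realized type $v_i$ itself.

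Given these two properties, $-\tau_i^*(v)$ and $-\tau_i^*(v_i',v_{-i})$ are the shortest distances from $\star$ to $v_i$ and to $v_i'$ in the same graph. The triangle inequality along the edge $(v_i',v_i)$, whose weight is $v_i(\phi^*(v)) - v_i(\phi^*(v_i',v_{-i}))$, immediately yields
\begin{align*}
-\tau_i^*(v)\le -\tau_i^*(v_i',v_{-i}) + v_i(\phi^*(v)) - v_i(\phi^*(v_i',v_{-i})),
\end{align*}
which rearranges to DSIC~\eqref{eq:DSIC}. Similarly, the direct edge $(\star,v_i)$ of weight $v_i(\phi^*(v))$ gives $-\tau_i^*(v)\le v_i(\phi^*(v))$, which is IR~\eqref{eq:IR}.

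The only substantive step is \thref{cor:no-neg-cyc-weighted}, which is already in hand; everything else is a direct transcription of Section~\ref{subsec:correct}. I do not anticipate an obstacle, since the shortest-path machinery is agnostic to whether the underlying option rule is an SE maximizer or a weighted affine maximizer—only the non-negative-cycle property matters, and that is what was adapted to the affine-maximizer setting in \thref{cor:no-neg-cyc-weighted}.
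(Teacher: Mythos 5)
Your proposal is correct and matches the paper's own treatment: the paper proves this theorem by stating that it follows from \thref{cor:no-neg-cyc-weighted} via ``exactly the same discussions'' as in Section~\ref{subsec:correct}, which is precisely the transcription you carry out. Your explicit observation that only the no-negative-cycle property and the $v_i$-independence of $G_i(v)$ are needed (not SE itself) is the right justification for why the argument transfers.
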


\begin{corollary} \thlabel{cor:maximal_weighted}
    The proposed mechanism $\mathcal{M}^*$ built with an affine maximizer $\phi^*$ satisfies,
    against any mechanism $\mathcal{M} = (\phi^*, \tau)$ that enjoys DSIC~\eqref{eq:DSIC} and IR~\eqref{eq:IR},
    \begin{align}
        \tau^*_i(v)        & \le \tau_i(v),        & \forall i\in\mathcal{N},\forall v\in\mathcal{V}, \label{eq:tau_maximized_weighted} \\
        B(v;\mathcal{M}^*) & \le B(v;\mathcal{M}), & \forall v\in\mathcal{V}. \label{eq:rev_maximized_weighted}
    \end{align}
\end{corollary}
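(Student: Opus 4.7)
The plan is to mirror the proof of \thref{thm:maximal} almost verbatim, noting that the graph $G_i(v)$, the algorithm, and the DSIC~\eqref{eq:DSIC}/IR~\eqref{eq:IR} constraints on the payment rule are all formally unchanged when $\phi^*$ is an affine maximizer rather than an SE option rule. The only ingredient that actually relies on SE in the original argument is \thref{lem:no-neg-cyc}, which guarantees that $\textproc{ShortestDistance}$ returns a finite value, and this has already been replaced by \thref{cor:no-neg-cyc-weighted} in the affine maximizer setting.

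First, I would fix arbitrary $i\in\mathcal{N}$ and $v\in\mathcal{V}$ and invoke \thref{cor:no-neg-cyc-weighted} to ensure that $G_i(v)$ contains no negative directed cycle, so that a shortest path from $\star$ to $v_i$ exists and $-\tau^*_i(v)$ equals its length. Let $(\star, v_i^{(0)}, v_i^{(1)}, \ldots, v_i^{(k)}=v_i)$ be one such shortest path, and set $v^{(\ell)}\coloneqq (v_i^{(\ell)},v_{-i})$. Reading off the edge weights along the path, as in \eqref{eq:thm2:1}--\eqref{tau_proposed}, yields the identity
\begin{align}
-\tau^*_i(v) = v_i^{(0)}(\phi^*(v^{(0)})) + \sum_{\ell=1}^{k}\bigl(v_i^{(\ell)}(\phi^*(v^{(\ell)})) - v_i^{(\ell)}(\phi^*(v^{(\ell-1)}))\bigr).
\end{align}

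Next, I would take any mechanism $\mathcal{M}=(\phi^*,\tau)$ that satisfies DSIC and IR and apply these two properties along the very same sequence of types. IR at $v^{(0)}$ gives $-\tau_i(v^{(0)})\le v_i^{(0)}(\phi^*(v^{(0)}))$, while DSIC, applied with true type $v_i^{(\ell)}$ and misreport $v_i^{(\ell-1)}$, gives the telescoping inequality
\begin{align}
-\tau_i(v^{(\ell)}) + \tau_i(v^{(\ell-1)}) \le v_i^{(\ell)}(\phi^*(v^{(\ell)})) - v_i^{(\ell)}(\phi^*(v^{(\ell-1)}))
\end{align}
for each $\ell\in\{1,\ldots,k\}$. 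Crucially, neither of these bounds depends on $\phi^*$ being SE; they use only the facts that $\phi^*$ is the common option rule of $\mathcal{M}^*$ and $\mathcal{M}$, and that $\tau$ satisfies DSIC and IR. Summing them telescopically along the path gives $-\tau_i(v)$ bounded above by the same expression appearing in the identity for $-\tau^*_i(v)$, which establishes \eqref{eq:tau_maximized_weighted}. Finally, summing \eqref{eq:tau_maximized_weighted} over $i\in\mathcal{N}$ yields \eqref{eq:rev_maximized_weighted}.

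I do not anticipate a real obstacle here, since the weights $(w,\lambda)$ of the affine maximizer enter only through the option rule's selection behavior and never through the primitive inequalities of DSIC or IR; the whole argument is a verbatim transcription of the proof of \thref{thm:maximal} with \thref{cor:no-neg-cyc-weighted} substituted for \thref{lem:no-neg-cyc}. The only subtlety worth flagging is conceptual: one must confirm that the algorithm $\mathcal{M}^*$ as stated produces a mechanism in the admissible class (i.e., satisfies DSIC and IR when built with an affine maximizer), which is exactly \thref{cor:justified_weighted} and is established analogously to \thref{thm:justified}, so we may legitimately compare $\mathcal{M}^*$ with any other such $\mathcal{M}$.
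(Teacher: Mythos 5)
Your proposal is correct and follows exactly the route the paper intends: it explicitly states that \thref{cor:maximal_weighted} is ``derived from exactly the same discussions'' as \thref{thm:maximal}, with \thref{cor:no-neg-cyc-weighted} substituted for \thref{lem:no-neg-cyc} to guarantee the existence of shortest paths. Your observation that the DSIC and IR inequalities along the shortest path never invoke SE, so the telescoping argument transfers verbatim to the affine-maximizer setting, is precisely the point.
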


\section{Additional experiments}
\label{sec:more_exp}

In this section, we provide additional results of experiments to confirm that the observations made with Figure~\ref{fig:exp} hold with other choices of parameters.  We use the same experimental settings as discussed in Section~\ref{sec:exp}.

\begin{figure}[tb]
    \centering
    $|\mathcal{N}|=8$\\
    \begin{subfigure}{0.245\linewidth}
        \includegraphics[width=\linewidth]{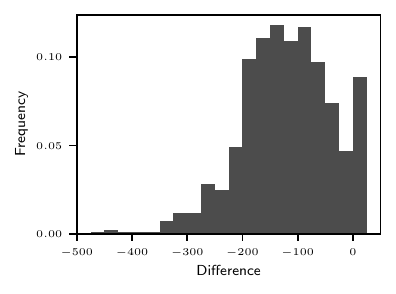}
    \end{subfigure}
    \begin{subfigure}{0.245\linewidth}
        \includegraphics[width=\linewidth]{figs/plot_diff_vs_n_32_256_16_-100_100_1000_stddev.pdf}
    \end{subfigure}
    \begin{subfigure}{0.245\linewidth}
        \includegraphics[width=\linewidth]{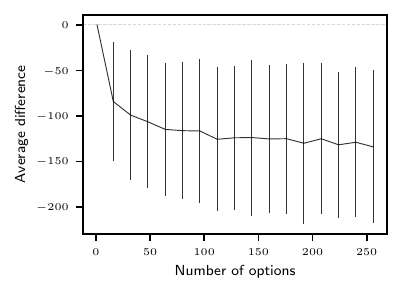}
    \end{subfigure}
    \begin{subfigure}{0.245\linewidth}
        \includegraphics[width=\linewidth]{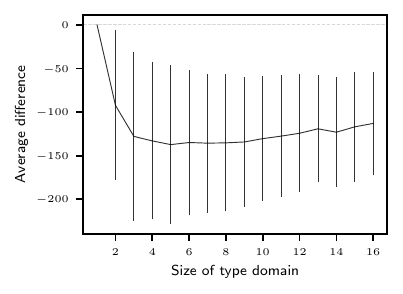}
    \end{subfigure}
    \ \\
    \ \\
    $|\mathcal{N}|=16$\\
    \begin{subfigure}{0.245\linewidth}
        \includegraphics[width=\linewidth]{figs/hist_diff_n_16_256_16_-100_100_1000_False.pdf}
    \end{subfigure}
    \begin{subfigure}{0.245\linewidth}
        \includegraphics[width=\linewidth]{figs/plot_diff_vs_n_32_256_16_-100_100_1000_stddev.pdf}
    \end{subfigure}
    \begin{subfigure}{0.245\linewidth}
        \includegraphics[width=\linewidth]{figs/plot_diff_vs_m_16_256_16_-100_100_1000_stddev.pdf}
    \end{subfigure}
    \begin{subfigure}{0.245\linewidth}
        \includegraphics[width=\linewidth]{figs/plot_diff_vs_d_16_256_16_-100_100_1000_stddev.pdf}
    \end{subfigure}
    \ \\
    \ \\
    $|\mathcal{N}|=32$\\
    \begin{subfigure}{0.245\linewidth}
        \includegraphics[width=\linewidth]{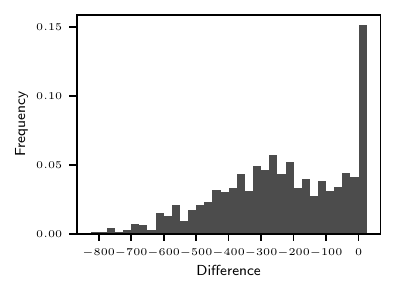}
        \caption{$|\mathcal{N}|=8, 16, 32$}
        \label{fig:expn:hist}
    \end{subfigure}
    \begin{subfigure}{0.245\linewidth}
      \includegraphics[width=\linewidth]{figs/plot_diff_vs_n_32_256_16_-100_100_1000_stddev.pdf}
      \caption{$1\le|\mathcal{N}|\le32$}
      \label{fig:expn:plot_vs_n}
    \end{subfigure}
    \begin{subfigure}{0.245\linewidth}
      \includegraphics[width=\linewidth]{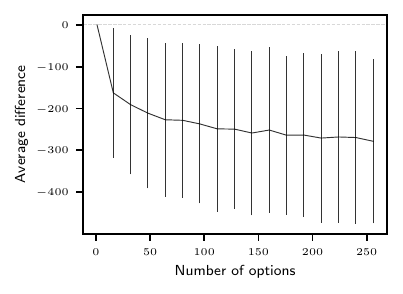}
      \caption{$1\le|\mathcal{X}|\le256$}
      \label{fig:expn:plot_vs_m}
    \end{subfigure}
    \begin{subfigure}{0.245\linewidth}
      \includegraphics[width=\linewidth]{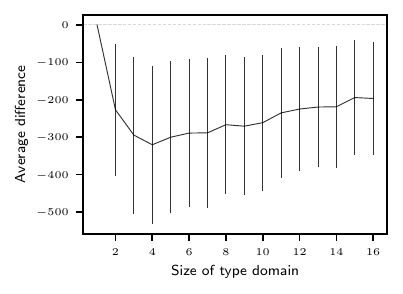}
      \caption{$1\le|\mathcal{V}_i|\le16$}
      \label{fig:expn:plot_vs_d}
    \end{subfigure}
    \caption{
      Difference in budget required by the proposed mechanism relative to VCG-budget, shown as
      a histogram in (\ref{sub@fig:expn:hist})
      as well as the average against the number of agents $|\mathcal{N}|$ in (\ref{sub@fig:expn:plot_vs_n}), 
      the number of options $|\mathcal{X}|$ in (\ref{sub@fig:expn:plot_vs_m}), and      
      the size of type domain $|\mathcal{V}_i|$ in (\ref{sub@fig:expn:plot_vs_d}),
      where $|\mathcal{N}|$ is fixed at 8, 16, or 32, as indicated in each row, except in (\ref{sub@fig:expn:plot_vs_n}).
    }
    \label{fig:expn}
\end{figure}

In Figure~\ref{fig:expn}, we vary the value of $|\mathcal{N}|$, which is fixed at $|\mathcal{N}|=16$ in Figure~\ref{fig:exp}.  Specifically, we let $|\mathcal{N}|=8$ in the top row, $|\mathcal{N}|=16$ in the middle row, and $|\mathcal{N}|=32$ in the bottom row.  Hence, the middle row is identical to Figure~\ref{fig:exp}.  Also, all of the three panels in Figure~\ref{fig:expn:plot_vs_n} are identical to Figure~\ref{fig:exp:plot_vs_n}, since $|\mathcal{N}|$ is varied identically in these panels.  Notice the difference in the scale of each panel.

Overall, we reconfirm that the proposed mechanism requires strictly lower budget than VCG-budget for a large fraction of instances in the settings of Figure~\ref{fig:expn}.  Specifically, Figure~\ref{fig:expn:hist} shows that the proposed mechanism strictly improves upon VCG-budget in 91.1\% of the instances with $|\mathcal{N}|=8$, and 88.3\% with $|\mathcal{N}|=16$, and 84.9\% with $|\mathcal{N}|=32$.  In Figure~\ref{fig:expn:plot_vs_n}-\ref{fig:expn:plot_vs_d}, we can also reconfirm that the relative advantage of the proposed mechanism with respect to the average difference in the budget tends to increase with $|\mathcal{N}|$ (observe the changes in the scale from the top row to the bottom).

\begin{figure}[tb]
    \centering
    $v_i(X)\in[-1,1]$\\
    \begin{subfigure}{0.245\linewidth}
        \includegraphics[width=\linewidth]{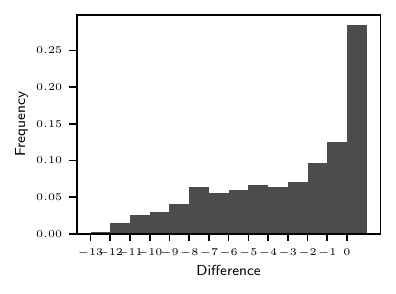}
    \end{subfigure}
    \begin{subfigure}{0.245\linewidth}
        \includegraphics[width=\linewidth]{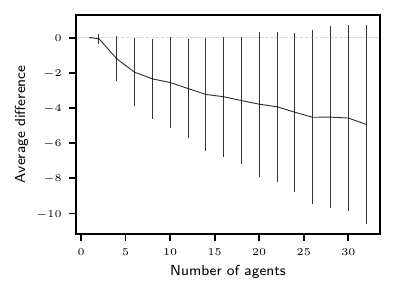}
    \end{subfigure}
    \begin{subfigure}{0.245\linewidth}
        \includegraphics[width=\linewidth]{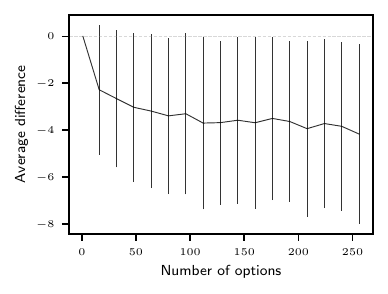}
    \end{subfigure}
    \begin{subfigure}{0.245\linewidth}
        \includegraphics[width=\linewidth]{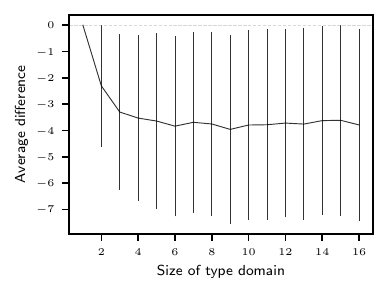}
    \end{subfigure}
    \ \\
    \ \\
    $v_i(X)\in[-10,10]$\\
    \begin{subfigure}{0.245\linewidth}
        \includegraphics[width=\linewidth]{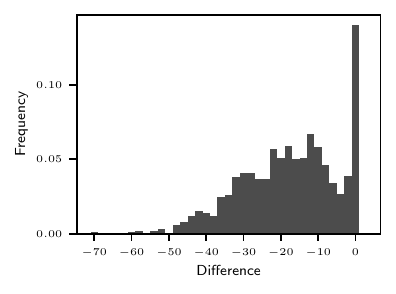}
    \end{subfigure}
    \begin{subfigure}{0.245\linewidth}
        \includegraphics[width=\linewidth]{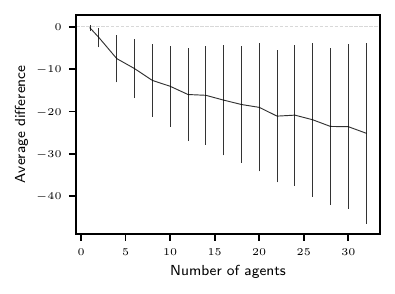}
    \end{subfigure}
    \begin{subfigure}{0.245\linewidth}
        \includegraphics[width=\linewidth]{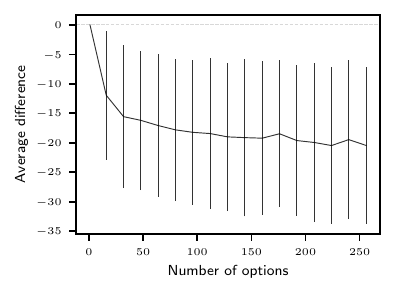}
    \end{subfigure}
    \begin{subfigure}{0.245\linewidth}
        \includegraphics[width=\linewidth]{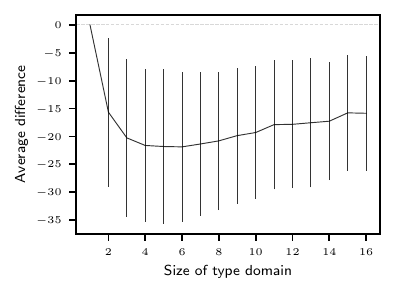}
    \end{subfigure}
    \ \\
    \ \\
    $v_i(X)\in[-100,100]$\\
    \begin{subfigure}{0.245\linewidth}
        \includegraphics[width=\linewidth]{figs/hist_diff_n_16_256_16_-100_100_1000_False.pdf}
    \end{subfigure}
    \begin{subfigure}{0.245\linewidth}
        \includegraphics[width=\linewidth]{figs/plot_diff_vs_n_32_256_16_-100_100_1000_stddev.pdf}
    \end{subfigure}
    \begin{subfigure}{0.245\linewidth}
        \includegraphics[width=\linewidth]{figs/plot_diff_vs_m_16_256_16_-100_100_1000_stddev.pdf}
    \end{subfigure}
    \begin{subfigure}{0.245\linewidth}
        \includegraphics[width=\linewidth]{figs/plot_diff_vs_d_16_256_16_-100_100_1000_stddev.pdf}
    \end{subfigure}
    \ \\
    \ \\
    $v_i(X)\in[-1000,1000]$\\
    \begin{subfigure}{0.245\linewidth}
        \includegraphics[width=\linewidth]{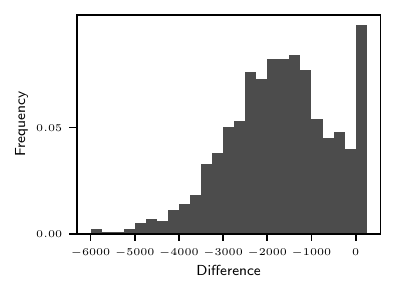}
        \caption{$|\mathcal{N}|=16$}
        \label{fig:expv:hist}
    \end{subfigure}
    \begin{subfigure}{0.245\linewidth}
      \includegraphics[width=\linewidth]{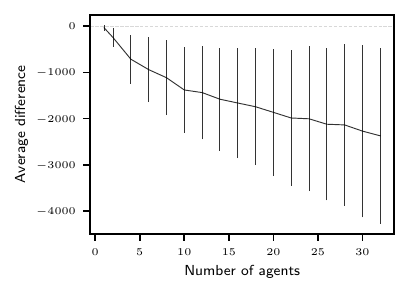}
      \caption{$1\le|\mathcal{N}|\le32$}
      \label{fig:expv:plot_vs_n}
    \end{subfigure}
    \begin{subfigure}{0.245\linewidth}
      \includegraphics[width=\linewidth]{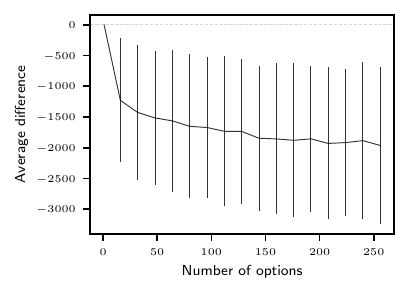}
      \caption{$1\le|\mathcal{X}|\le256$}
      \label{fig:expv:plot_vs_m}
    \end{subfigure}
    \begin{subfigure}{0.245\linewidth}
      \includegraphics[width=\linewidth]{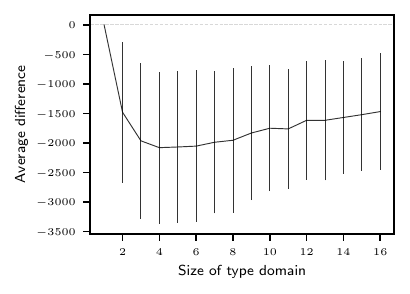}
      \caption{$1\le|\mathcal{V}_i|\le16$}
      \label{fig:expv:plot_vs_d}
    \end{subfigure}
    \caption{
      Difference in budget required by the proposed mechanism relative to VCG-budget, shown as
      a histogram in (\ref{sub@fig:expv:hist})
      as well as the average against the number of agents $|\mathcal{N}|$ in (\ref{sub@fig:expv:plot_vs_n}), 
      the number of options $|\mathcal{X}|$ in (\ref{sub@fig:expv:plot_vs_m}), and      
      the size of type domain $|\mathcal{V}_i|$ in (\ref{sub@fig:expv:plot_vs_d}),
      where $|\mathcal{N}|=16$ is fixed except in (\ref{sub@fig:expv:plot_vs_n}).
      Here, we vary the range of $v_i(X)$ as indicated in each row.
    }
    \label{fig:expv}
\end{figure}

In Figure~\ref{fig:expv}, we now vary the support of the uniform distribution on integers\footnote{The integer support is chosen to ensure that all numerical data and computations are handled at unlimited precision.} from which $v_i(X)$ is sampled.  Note that, although changes in the \emph{scale} of the support of $v_i(X)$ would equally change the scale of the budgets required by the proposed mechanism and VCG-budget, changes in the \emph{range of the integers} would change the number of unique values that $v_i(X)$'s can take, which in turn can affect the relative performance of the mechanisms (recall the arguments in the proof of \thref{thm:time}).  Specifically, we let the range of the integers be $[-1, 1]$ in the first row, $[-10, 10]$ in the second row, $[-100, 100]$ in the third row, and $[-1000, 1000]$ in the bottom row.  Hence, the third row is identical to Figure~\ref{fig:exp}.  Again, notice the difference in the scale of each panel.

Overall, we reconfirm that the proposed mechanism requires strictly lower budget than VCG-budget for a large fraction of instances for all of the ranges studied in Figure~\ref{fig:expv}.  Taking a closer look, we can observe in Figure~\ref{fig:expv:hist} that there are relatively infrequent opportunities for the proposed mechanism to improve upon VCG-budget when there are only a few unique values of $v_i(X)$ (top row).  Specifically, the proposed mechanism strictly improves upon VCG-budget in 71.6\% of the instances when $v_i(X)$ can only take an integer value in $[-1, 1]$, while this fraction increases to 86.0\%, 88.3\%, and 90.2\% when the range changes to $[-10, 10]$, $[-100, 100]$, and $[-1000, 1000]$, respectively.  We can also make analogous observations in Figure~\ref{fig:expv:plot_vs_n}-\ref{fig:expv:plot_vs_d}.  While some of the error bars contains 0 in the top row, this is simply due to the skewed distribution of the differences as shown in Figure~\ref{fig:expv:hist} and does not imply that the average difference may be positive.  \thref{thm:maximal} guarantees that the proposed mechanism requires no larger budget than VCG-budget \emph{for every instance}.


\section{Redistribution of revenue} \label{sec:redist}


First,
let us formally define the budget balance conditions mentioned in Section~\ref{sec:related},~\ref{sec:model} as follows.

\begin{definition} \thlabel{def:BB}
  For an environment $\mathcal{E}=(\mathcal{N}, \mathcal{X}, \mathcal{V})$,
  a mechanism $\mathcal{M} = (\phi, \tau)$ is said to satisfy
  \textbf{Weak Budget Balance (WBB)} if and only if
  \begin{align}
     & B(v; \mathcal{M}) \le 0, & \forall v\in\mathcal{V}; \label{eq:WBB}
  \end{align}
  \textbf{Strong Budget Balance (SBB)} if and only if
  \begin{align}
     & B(v; \mathcal{M}) = 0, & \forall v\in\mathcal{V}. \label{eq:SBB}
  \end{align}
\end{definition}

With regard to redistribution of revenue mentioned in Section \ref{sec:related},
the following \thref{cor:redist} constructs one possible algorithm that can adapt the proposed mechanism
to return part of the surplus (negative budget) back to agents
while any positive budget are still minimized.

\begin{lemma} \thlabel{lem:redist}
  For any justified mechanism $\mathcal{M} = (\phi, \tau)$,
  a mechanism $\mathcal{M}' = (\phi, \tau')$ is also justified if
  \begin{align}
    \tau'_i(v) = \tau_i(v) + h_i(v_{-i}),\quad\forall i\in\mathcal{N},\forall v\in\mathcal{V}
  \end{align}
  holds with some function $h_i:\mathcal{V}_{-i}\to\mathbb{R}_{\ge 0}$ for each $i\in\mathcal{N}$.
\end{lemma}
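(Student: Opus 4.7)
The plan is to verify each of the three defining properties of a justified mechanism (SE, DSIC, and IR) for $\mathcal{M}' = (\phi, \tau')$, using the corresponding property for $\mathcal{M} = (\phi, \tau)$ together with the simple structural assumption on $h_i$.

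First, SE is immediate: since $\mathcal{M}'$ and $\mathcal{M}$ share the same option rule $\phi$, and $\mathcal{M}$ satisfies SE \eqref{eq:SE} by hypothesis, so does $\mathcal{M}'$. Next, for DSIC, I would compute $u_i(v;\mathcal{M}') = v_i(\phi(v)) + \tau'_i(v) = u_i(v;\mathcal{M}) + h_i(v_{-i})$ and, symmetrically, $v_i(\phi(v_i',v_{-i})) + \tau'_i(v_i',v_{-i}) = v_i(\phi(v_i',v_{-i})) + \tau_i(v_i',v_{-i}) + h_i(v_{-i})$. The key observation here is that $h_i$ depends only on $v_{-i}$, which is \emph{identical} on the two sides of the DSIC inequality under a unilateral deviation by agent $i$; therefore, $h_i(v_{-i})$ cancels, and DSIC for $\mathcal{M}'$ reduces directly to DSIC \eqref{eq:DSIC} for $\mathcal{M}$.

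Finally, for IR, I would write $u_i(v;\mathcal{M}') = u_i(v;\mathcal{M}) + h_i(v_{-i})$ and note that the first term is non-negative by IR \eqref{eq:IR} for $\mathcal{M}$ and the second term is non-negative by the assumption $h_i:\mathcal{V}_{-i}\to\mathbb{R}_{\ge 0}$, so $u_i(v;\mathcal{M}')\ge 0$.

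There is no real obstacle in this proof; the only subtlety worth emphasizing is the reason why adding an \emph{arbitrary non-negative function of $v_{-i}$} preserves DSIC, namely that DSIC is a pointwise inequality over $v_i$-deviations at fixed $v_{-i}$, so any term independent of $v_i$ cancels from both sides. This is also precisely why the construction is useful for redistribution: one may freely channel non-negative quantities $h_i(v_{-i})$ back to agent $i$ without breaking justification, and only the cross-agent budget impact $\sum_i h_i(v_{-i})$ needs to be controlled against the negative budget produced by the underlying justified mechanism.
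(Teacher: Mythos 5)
Your proposal is correct and is precisely the detailed version of the argument the paper leaves as "straightforward from \thref{def:justified}": SE is unaffected because the option rule is unchanged, the $v_i$-independent term $h_i(v_{-i})$ cancels from both sides of the DSIC inequality, and its non-negativity preserves IR. No gaps.
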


\begin{proof}
  The claim straightforwardly follows from \thref{def:justified}.
\end{proof}

\begin{corollary} \thlabel{cor:redist}
  For any justified mechanism  $\mathcal{M} = (\phi, \tau)$,
  there exists a justified mechanism $\mathcal{M}' = (\phi, \tau')$ that satisfies the following:
  \begin{align}
    \tau'_i(v)                                            & \ge \tau_i(v),                   & \forall i\in \mathcal{N},\forall v\in\mathcal{V},          \\
    B(v;\mathcal{M}')                                     & \le \max\{0, B(v;\mathcal{M})\}, & \forall v\in\mathcal{V},                                   \\
    \max_{v_i\in\mathcal{V}_i} B(v_i,v_{-i};\mathcal{M}') & \ge 0,                           & \forall v_{-i}\in\mathcal{V}_{-i},\forall i\in\mathcal{N}.
  \end{align}
\end{corollary}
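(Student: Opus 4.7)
The plan is to construct $\mathcal{M}'$ by iteratively invoking \thref{lem:redist} once per agent, each step redistributing as much of the broker's surplus as possible to a single agent without exceeding the zero threshold.

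Enumerate $\mathcal{N} = \{i_1, \ldots, i_n\}$ where $n = |\mathcal{N}|$, and define mechanisms $\mathcal{M}^{(0)} := \mathcal{M}, \mathcal{M}^{(1)}, \ldots, \mathcal{M}^{(n)}$ inductively: at step $k \in \{1, \ldots, n\}$, modify only agent $i_k$'s payment by adding
\begin{align*}
    h^{(k)}(v_{-i_k}) := \max\left\{0,\ -\max_{v_{i_k}' \in \mathcal{V}_{i_k}} B(v_{i_k}', v_{-i_k}; \mathcal{M}^{(k-1)})\right\} \ge 0
\end{align*}
to obtain $\mathcal{M}^{(k)}$. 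By \thref{lem:redist} (applied with $h_{i_k} = h^{(k)}$ and $h_j \equiv 0$ for $j \ne i_k$), each $\mathcal{M}^{(k)}$ remains justified. Finally, take $\mathcal{M}' := \mathcal{M}^{(n)}$.

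The first property $\tau'_i(v) \ge \tau_i(v)$ is immediate since every step adds only a non-negative quantity to one agent's payment. For the second property, I would prove by induction on $k$ that $B(v; \mathcal{M}^{(k)}) \le \max\{0, B(v; \mathcal{M})\}$ for every $v \in \mathcal{V}$. The inductive step splits on the sign of $B(v; \mathcal{M}^{(k-1)})$: if it is non-negative, then $\max_{v_{i_k}'} B(v_{i_k}', v_{-i_k}; \mathcal{M}^{(k-1)}) \ge B(v; \mathcal{M}^{(k-1)}) \ge 0$ forces $h^{(k)}(v_{-i_k}) = 0$, leaving $B$ unchanged; if it is negative, the same comparison yields $h^{(k)}(v_{-i_k}) \le -B(v; \mathcal{M}^{(k-1)})$, so $B(v; \mathcal{M}^{(k)}) \le 0$. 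Either way the inductive bound is preserved, and idempotence of $\max\{0, \cdot\}$ collapses the chain back to $\max\{0, B(v;\mathcal{M})\}$.

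For the third property, observe that by the very definition of $h^{(k)}$, at the end of step $k$ we have $\max_{v_{i_k}} B(v_{i_k}, v_{-i_k}; \mathcal{M}^{(k)}) \ge 0$ for every $v_{-i_k}$, since $h^{(k)}(v_{-i_k})$ is constant in $v_{i_k}$ and added to every slice. The crucial monotonicity observation is that every later step adds a non-negative function to some agent's payment, so $B(v; \mathcal{M}^{(k')})$ is non-decreasing in $k'$ for every $v$; hence $\max_{v_{i_k}} B(v_{i_k}, v_{-i_k}; \mathcal{M}^{(n)}) \ge \max_{v_{i_k}} B(v_{i_k}, v_{-i_k}; \mathcal{M}^{(k)}) \ge 0$ for each $i_k$.

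The main obstacle I anticipate is the tension between the upper cap of property 2 and the slice-lower-bound of property 3, which must hold simultaneously for \emph{every} agent. A naive single-shot construction (modifying just one designated agent, or adding independent per-agent redistributions based on the original $B$) tends to either double-count the surplus and violate property 2, or leave some other agent's slice-maximum negative. The one-agent-at-a-time update resolves this by letting each agent secure its slice-maximum bound against the most recent mechanism in its own step without overshooting zero, while the monotonicity of $B$ across subsequent non-negative increments ensures that previously secured bounds are never destroyed.
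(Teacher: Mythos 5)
Your construction is exactly the paper's: the same agent-by-agent recursion with increment $\max\{0,\,-\max_{v_{i_k}'}B(v_{i_k}',v_{-i_k};\mathcal{M}^{(k-1)})\}$ (the paper writes it as $\max\{0,\min_{v_i'}-B(\cdot)\}$), justified at each step via \thref{lem:redist}. Your verification of the three properties is correct and in fact more detailed than the paper's, which merely asserts them.
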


\begin{proof}
  For convenience,
  let us have agents $\mathcal{N} \coloneqq \{1,\ldots,|\mathcal{N}|\}$ indexed.
  Given any justified mechanism $\mathcal{M}\eqqcolon\mathcal{M}^{(0)} = (\phi, \tau^{(0)})$,
  we recursively define the mechanism $\mathcal{M}^{(t)}=(\phi, \tau^{(t)})$ as follows:
  \begin{align}
    \tau^{(t)}_i(v) = \begin{cases}
                        \tau^{(t-1)}_i(v) + \max\left\{0, \displaystyle\min_{v_i'\in\mathcal{V}_i}-B(v_i',v_{-i};\mathcal{M}^{(t-1)}) \right\} & \text{if } i = t, \\
                        \tau^{(t-1)}_i(v)                                                                                                      & \text{otherwise,}
                      \end{cases} \nonumber \\
    \forall i\in\mathcal{N},\forall v\in\mathcal{V},\forall t\in\mathcal{N}.
  \end{align}
  Then every $\mathcal{M}^{(t)}$ is justified because of \thref{lem:redist} and satisfies the following:
  \begin{align}
    \tau^{(t)}_i(v)                                            & \ge \tau^{(t-1)}_i(v),                  & \forall i\in\mathcal{N}, \forall v\in\mathcal{V},\forall t\in\mathcal{N},             \\
    B(v;\mathcal{M}^{(t)})                                     & \le \max\{0,B(v;\mathcal{M}^{(t-1)})\}, & \forall v\in\mathcal{V},\forall t\in\mathcal{N},                                      \\
    \max_{v_i\in\mathcal{V}_i} B(v_i,v_{-i};\mathcal{M}^{(t)}) & \ge 0,                                  & \forall v_{-i}\in\mathcal{V}_{-i},\forall i\in\{1,\ldots,t\},\forall t\in\mathcal{N}.
  \end{align}
  Therefore,
  the mechanism $\mathcal{M}'\coloneqq\mathcal{M}^{(|\mathcal{N}|)}$ finally satisfies all the desired conditions.
\end{proof}

Note that the proposed mechanism (Algorithm~\ref{alg:main}) coupled with some redistribution technique,
the one in \thref{cor:redist} or other methods,
offers a way to \textit{approximately} achieve SBB~\eqref{eq:SBB} while satisfying SE~\eqref{eq:SE}, DSIC~\eqref{eq:DSIC}, and IR~\eqref{eq:IR}.
When WBB~\eqref{eq:WBB} is unattainable under those three conditions,
the proposed mechanism yields the minimum budget with no redistribution applied (i.e., SBB is approximated in the best possible manner).
Otherwise,
if WBB can be achieved,
the proposed mechanism may result in positive revenue,
which is then discounted by the redistribution to better approximate SBB.
It remains future work to establish an \textit{optimal} redistribution method for the proposed mechanism just as studied for the VCG mechanisms.

\section{Broader impacts}
\label{sec:impact}

Our approach can have positive societal impacts by minimizing the budget needed to realize socially efficient mechanisms for a number of environments including trading networks, double-sided auctions in cloud markets, workforce management, federated learning, and cloud-sourcing, as we have discussed in Section~\ref{sec:intro}.  This however does not mean that all the agents get equally benefit from our approach, even though our approach guarantees individual rationality.  When our approach is applied, it is thus recommended to carefully assess whether such fairness needs to be considered and take necessary actions to mitigate unfairness if needed.

\end{document}